\def\BibTeX{{\rm B\kern-.05em{\sc i\kern-.025em b}\kern-.08emT\kern-.1667em\lower.7ex\hbox{E}\kern-.125emX}}
\renewcommand\footnotetextcopyrightpermission[1]{}
\titleformat*{\subsection}{\normalsize\bfseries}
\mathchardef\mhyphen="2D  
\renewcommand{\vec}[1]{{\bf #1}}
\newcommand{\mat}[1]{{\bf #1}}
\newcommand{\la}{\langle}
\newcommand{\ra}{\rangle}
\newcommand{\ot}{\textsc{OT}}
\renewcommand{\cot}[1]{\textsc{COT}_{#1}}
\newenvironment{protocol}[1][htb]
  {
   \begin{algorithm2e}[#1]
  }{\end{algorithm2e}}
\newcommand{\share}[1]{[\![{#1}]\!]}
\newcommand{\xorshare}[1]{\la{#1}\ra}
\newcommand{\A}{\mathsf{P_1}}
\newcommand{\B}{\mathsf{P_2}}
\newcommand{\partyi}{\mathsf{P_i}}
\newcommand{\QOO}{\texttt{QUOTIENT[AMSgrad=Ours, Norm=Ours]}}
\newcommand{\QSO}{\texttt{QUOTIENT[AMSgrad=Std, Norm=Ours]}}
\newcommand{\QOW}{\texttt{QUOTIENT[AMSgrad=Ours, Norm=WAGE]}}
\newcommand{\QSSO}{\texttt{QUOTIENT[SGD=Std, Norm=Ours]}}
\definecolor{hublue} {RGB}{  0, 55,108}
\definecolor{hured}  {RGB}{138, 15, 20}
\definecolor{hugreen}{RGB}{  0, 87, 44}
\definecolor{husand} {RGB}{210,192,103}
\definecolor{hugraygreen}{RGB}{209,209,194}
\definecolor{hugrayblue} {RGB}{189,202,211}
\definecolor{distinct11}{HTML}{77aadd}
\definecolor{distinct12}{HTML}{77cccc}
\definecolor{distinct13}{HTML}{88ccaa}
\definecolor{distinct14}{HTML}{dddd77}
\definecolor{distinct15}{HTML}{ddaa77}
\definecolor{distinct16}{HTML}{dd7788}
\definecolor{distinct17}{HTML}{cc99bb}
\definecolor{distinct21}{HTML}{4477aa}
\definecolor{distinct22}{HTML}{44aaaa}
\definecolor{distinct23}{HTML}{44aa77}
\definecolor{distinct24}{HTML}{aaaa44}
\definecolor{distinct25}{HTML}{aa7744}
\definecolor{distinct26}{HTML}{aa4455}
\definecolor{distinct27}{HTML}{aa4488}
\definecolor{distinct31}{HTML}{114477}
\definecolor{distinct32}{HTML}{117777}
\definecolor{distinct33}{HTML}{117744}
\definecolor{distinct34}{HTML}{777711}
\definecolor{distinct35}{HTML}{774411}
\definecolor{distinct36}{HTML}{771122}
\definecolor{distinct37}{HTML}{771155}
\newcommand{\relu}{\textsc{ReLU}}
\newcommand{\eb}{\mathbf{e}}
\newcommand{\Gb}{\mathbf{G}}
\newcommand{\ab}{\mathbf{a}}
\newcommand{\Wb}{\mathbf{W}}
\newcommand{\Mb}{\mathbf{M}}
\newcommand{\Vb}{\mathbf{V}}
\newcommand{\vb}{\mathbf{v}}
\newcolumntype{N}{>{\centering\arraybackslash}m{.5in}}
\newcommand{\myparagraph}[1]{\vspace{1ex}\noindent{\bf #1}}
\tikzset{
  roundbox/.style={
    rectangle, draw, rounded corners
  },
  arrow/.style={
    ->, >=stealth, rounded corners
  },
  midcircle/.style={
    midway, circle, draw, fill=white
  },
  box p1/.style={
    roundbox,
    fill=white
  },
  box p2/.style={
    roundbox,
    fill=white
  },
  box mpc/.style={
    roundbox,
    fill=distinct12!66
  }
}
\begin{document}
\fancyhead{}

\title{QUOTIENT: Two-Party Secure Neural Network Training and Prediction}
  
\author{Nitin Agrawal}
\authornote{Both authors contributed equally to the paper. This work was partly done during an internship at the Alan Turing Institute.}
\email{nitin.agrawal@cs.ox.ac.uk}
\affiliation{
  \institution{University of Oxford}
  \streetaddress{}
  \city{}
  \country{}
}

\author{Ali Shahin Shamsabadi}
\email{a.shahinshamsabadi@qmul.ac.uk}
\affiliation{
  \institution{Queen Mary University of London}
  \streetaddress{}
  \city{}
  \country{}
}
\authornotemark[1]

\author{Matt J. Kusner}
\email{mkusner@turing.ac.uk}
\affiliation{
  \institution{University of Oxford}
  \streetaddress{}
  \city{}
  \country{}
}
\affiliation{
  \institution{The Alan Turing Institute}
  \streetaddress{}
  \city{}
  \country{}
}
\author{Adri\`{a} Gasc\'{o}n}
\email{agascon@turing.ac.uk}
\affiliation{
  \institution{University of Warwick}
  \streetaddress{}
  \city{}
  \country{}
}
\affiliation{
  \institution{The Alan Turing Institute}
  \streetaddress{}
  \city{}
  \country{}
}

\begin{abstract}

Recently, there has been a wealth of effort devoted to the design of secure protocols for machine learning tasks. Much of this is aimed at enabling secure \emph{prediction} from highly-accurate Deep Neural Networks (DNNs). However, as DNNs are trained on data, a key question is how such models can be also \emph{trained} securely. The few prior works on secure DNN training have focused either on designing custom protocols for existing training algorithms, or on developing tailored training algorithms and then applying generic secure protocols. In this work, we investigate the advantages of designing training algorithms alongside a novel secure protocol, incorporating optimizations on both fronts. We present QUOTIENT, a new method for discretized training of DNNs, along with a customized secure two-party protocol for it. QUOTIENT incorporates key components of state-of-the-art DNN training such as layer normalization and adaptive gradient methods, and improves upon the state-of-the-art in DNN training in two-party computation. Compared to prior work, we obtain an improvement of 50X in WAN time and 6\% in absolute accuracy.

\end{abstract}

\keywords{Secure multi-party computation, Privacy-preserving deep learning, Quantized deep neural networks}

\settopmatter{printfolios=true}

\maketitle

\section{Introduction}

The field of secure computation, and in particular Multi-Party Computation (MPC) techniques such as garbled circuits and lower level primitives like Oblivious Transfer (OT) have undergone very impressive developments in the last decade. This has been due to a sequence of engineering and theoretical breakthroughs, among which OT Extension~\cite{DBLP:conf/crypto/IshaiKNP03} is of special relevance.

However, classical generic secure computation protocols do not scale to real-world Machine Learning (ML) applications. To overcome this, 
recent works have combined different secure computation techniques to design custom protocols for specific ML tasks. This includes optimization of linear/logistic regressors and neural networks~\cite{nikolaenko_privacy-preserving_2013, petsregression, mohassel2017secureml,mohassel2018aby}, matrix factorization~\cite{nikolaenko_privacy-preserving_2013b}, constrained optimization~\cite{blindjustice}, and $k$-nearest neighbor classification~\cite{knn, knnwithgc}. For example, Nikolaenko et. al.~\cite{nikolaenko_privacy-preserving_2013} propose a protocol for secure distributed ridge regression that combines additive homomorphic encryption and garbled circuits, while previous works~\cite{petsregression,mohassel2017secureml} rely in part on OT for the same functionality. 

\myparagraph{Practical ML and Secure Computation: Two Ships Passing in the Night.}
While there have been massive practical developments in both cryptography and ML (including the works above), most recent works for model training~\cite{mohassel2017secureml, mohassel2018aby, wagh2019securenn, hesamifard2018privacy} and prediction~\cite{cryptopredict,cryptonets,tapas} on encrypted data are largely based on optimizations for {\em either} the ML model or the employed cryptographic techniques in isolation. In this work, we show that there is a benefit in taking a holistic approach to the problem. Specifically, our goal is to design an optimization algorithm alongside a secure computation protocol customized for it.

\myparagraph{Secure Distributed Deep Neural Network Training.} So far there has been little work on training Deep Neural Networks (DNNs) on encrypted data. The only works that we are aware of are ABY3~\cite{mohassel2018aby} and SecureML~\cite{mohassel2017secureml}. Different from this work, ABY3~\cite{mohassel2018aby} designs techniques for encrypted training of DNNs in the $3$-party case, and a majority of honest parties. The work most similar to ours is SecureML~\cite{mohassel2017secureml}. They propose techniques based on secret-sharing to implement a stochastic gradient descent procedure for training linear/logistic regressors and DNNs in two-party computation. While the presented techniques are practical and general, there are three notable downsides: 1. They require an ``offline'' phase, that while being data-independent, takes up most of the time (more than $80$ hours for a $3$-layer DNN on the MNIST dataset in the 2-Party Computation (2PC) setting); 2. 
Their techniques are not practical over WAN (more than $4277$ hours for a $3$-layer DNN on the MNIST dataset), restricting their protocols to the LAN setting; 3. The accuracy of the obtained models are lower than state-of-the-art deep learning methods. More recently, secure training using homomorphic encryption has been proposed~\cite{hesamifard2018privacy}. While the approach limits the communication overhead, it's estimated to require more than a year to train a $3$-layer network on the MNIST dataset, making it practically unrealizable.
Furthermore, the above works omit techniques necessary for modern DNNs training such as normalization \& adaptive gradient methods, instead relying on vanilla SGD with constant step size. 
In this paper we argue that significant changes are needed in the way ML models are trained in order for them to be suited for practical evaluation in MPC. Crucially, our results show that securely-trained DNNs do not need to take a big accuracy hit if secure protocols and ML models are customized \emph{jointly}.

\myparagraph{Our Contributions:}
In this work we present QUOTIENT, a new method for secure two-party training and evaluation of DNNs. Alongside we develop an implementation in secure computation with semi-honest security, which we call 2PC-QUOTIENT. Our main insight is that recent work on training deep networks in fixed-point for embedded devices~\cite{wu2018training} can be leveraged for secure training. Specifically, it contains useful primitives such as repeated quantization to low fixed-point precisions to stabilize optimization. However, out of the box this work does not lead to an efficient MPC protocol. To do so, we make the following contributions, both from the ML and the MPC perspectives:

1. We ternarize the network weights: $\Wb \in \{-1,0,1\}$ during the forward and backward passes. As a result, ternary matrix-vector multiplication becomes a crucial primitive for training. We then propose a specialized protocol for ternary matrix-vector multiplication based on Correlated Oblivious Transfer that combines Boolean-sharing and additive-sharing for efficiency. 

2. We further tailor the backward pass in an MPC-aware way, by replacing operations like quantization and normalization by alternatives with very efficient implementations in secure computation. We observe empirically in Section 6 that this change has no effect on accuracy. Alongside these changes, we extend the technique to residual layers~\cite{he2016deep}, a crucial building block for DNNs.

3. We design a new fixed-point optimization algorithm inspired by a state-of-the-art floating-point adaptive gradient optimization procedure~\cite{reddi2018convergence}. 

4. We implement and evaluate our proposal  in terms of accuracy and running time on a variety of real-world datasets. We achieve accuracy nearly matching state-of-the-art \emph{floating point accuracy} on 4 out of 5 datasets. Compared to state of the art 2PC secure DNN training \cite{mohassel2017secureml}, our techniques obtain $\sim$6\% absolute accuracy gains and $>\!50\times$ speedup over WAN for both training and prediction.

The rest of the paper is organized as follows. In the next section we introduce  notation and necessary background in ML and MPC. In Section~\ref{sec:ml} we present our proposed neural network primitives, and corresponding training procedure. In Section~\ref{section:crypto} we present a 2PC protocol for it. Finally, we present our experimental evaluation in Section~\ref{sec:experiments} and conclude with a short discussion.

\section{Overview and Problem Description}
Here, we introduce concepts in DNN training and inference, fixed-point encodings, and MPC applied to ML.
\subsection{Deep Neural Networks}

All DNNs are defined by a core set of operations, called a \emph{layer}. These layers are repeatedly applied to an input $\ab^0$ to produce a desired output $\ab^L$, where the number of repetitions (or layers) $L$ is called the \emph{depth} of the network. 
Every layer consists of a linear operation and a non-linear operation and, depending on the type of network, each layer takes on different forms. We describe three popular layer types that make up a large portion of state-of-the-art DNNs: \emph{fully-connected}, \emph{convolutional}, and \emph{residual}.
 
\begin{figure}[t]
\centering
\includegraphics[width=\columnwidth]{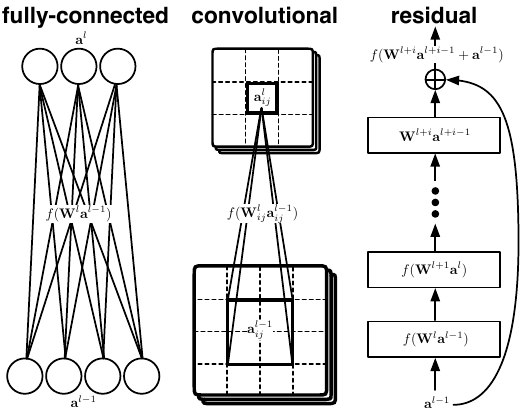}
\caption{We have protocols for three popular deep neural network layers: \emph{fully-connected}, \emph{convolutional}, and \emph{residual}.}
\label{fig:layers}
\end{figure}
\myparagraph{Fully-Connected Layers.}
To define a layer we simply need to define the linear and non-linear operations they use. In fully-connected layers these operations are as follows. 
Given an input $\ab^{l-1} \in \mathbb{R}^{h_{l-1}}$ of any fully-connected layer $l \in \{1,\ldots,L\}$, the layer performs two operations (1) a multiplication: $\Wb^{l} \ab^{l-1}$ with a weight matrix $\Wb^{l} \in \mathbb{R}^{h_{l} \times h_{l-1}}$; and (2) a non-linear operation, most commonly the Rectified Linear Unit: $\relu(x) = \max(x, 0)$. So the full layer computation is $\ab^{l} = \relu(\Wb^{l} \ab^{l-1})$, where $\ab^l$ is called the \emph{activation} of layer $l$. Note that `fully-connected' refers to the fact that any entry of $\ab^{l-1}$ is `connected' to the output $\ab^l$ via the weight matrix $\Wb^l$, shown schematically in Figure~\ref{fig:layers} (\emph{left}). 
Note that, more generic representations involve an intermediate step; adding a bias term $b^l$ to compute ($\Wb^{l} \ab^{l-1} + b^l$) before performing the non-linear operation ($\ab^{l} = \relu(\Wb^{l} \ab^{l-1}+b^l)$). In practice, this can be handled by suitably modifying $\Wb^l$ and $\ab^{l-1}$ prior to performing operation (1).

\myparagraph{Convolutional Layers.}
Convolutional layers are essentially fully-connected layers with a very particular connectivity structure. Whereas the input and output of a fully-connected layer are vectors, the input and output of a convolutional layer are 3rd-order tensors (i.e., an array of matrices). Specifically, the input $\ab^{l-1} \in \mathbb{R}^{h_{l-1} \times w_{l-1} \times c_{l-1}}$ can be thought of as an image with height $h_{l-1}$, width $w_{l-1}$, and channels $c_{l-1}$ (e.g., $c_{l-1}\!=\!3$ for RGB images). 

To map this to an output $\ab^l \in \mathbb{R}^{h_{l} \times w_{l} \times c_{l}}$, a convolutional layer repeatedly looks at small square regions of the input, and moves this region from left-to-right, from top-to-bottom, until the entire image has been passed over. Let $\ab^{l-1}_{ij} \in \mathbb{R}^{k_{l-1} \times k_{l-1} \times c_{l-1}}$ be the $k_{l-1} \times k_{l-1}$ region starting at entry $(i,j)$. This is then element-wise multiplied by a set of weights $\Wb^{l} \in \mathbb{R}^{k_l \times k_l \times c_{l-1} \times c_{l}}$ and summed across the first three dimensions of the tensor. This is followed by a non-linear operation, also (most often) the $\relu$ to produce an entry of the output $\ab^l_{ij} \in \mathbb{R}^{1 \times 1 \times 1 \times c_{l}}$. Note that $\ab^{l-1}_{ij}, \ab^{l}_{ij}, \Wb^l$ can all be vectorized such that $\ab^{l}_{ij} = f(\Wb^l \ab^{l-1}_{ij})$. This operation is shown in Figure~\ref{fig:layers} (\emph{center}). Apart from the dimensionality of the layers, other hyperparameters of these networks include: the size of square region, the number of `pixels' that square regions jump in successive iterations (called the \emph{stride}), and whether additional pixels with value $0$ are added around the image to adjust the output image size (called \emph{zero-padding}).

A common variant of a convolutional layer is a \emph{pooling layer.} In these cases, the weights are always fixed to $1$ and the non-linear function is simply $f(\cdot) \triangleq \max\{\cdot\}$ (other less popular $f(\cdot)$ include simple averaging or the Euclidean norm).

\myparagraph{Residual Layers.}
The final layer type we consider are residual layers~\cite{he2016deep}. Residual layers take an activation from a prior layer $\ab^{l-1}$ and add it to a later layer $l+i$ before the non-linear function $f(\cdot)$ (usually \relu) is applied: $f(\Wb^{l+i} \ab^{l+i-1} + \ab^{l-1})$. The intermediate layers are usually convolutional but may be any layer in principle. Figure~\ref{fig:layers} (\emph{right}) shows an example of the residual layer. DNNs with residual layers were the first to be successfully trained with more than $50$ layers, achieving state-of-the-art accuracy on many computer vision tasks, and are now building blocks in many DNNs.

\myparagraph{Training Deep Neural Networks.}
The goal of any machine learning classifier is to map an input $\ab^0$ to a desired output $y$ (often called a \emph{label}). To train DNNs to learn this mapping, we would like to adjust the DNNs weights $\{\Wb^l\}_{l=1}^L$ so that the output $\ab^L$ after $L$ layers is: $\ab^L = y$. 
To do so, the most popular method for training DNN weights is via Stochastic Gradient Descent (SGD). Specifically, given a training dataset of input-output pairs $\{(\ab^0_i, y_i)\}_{i=1}^n$, and a loss function $\ell(\ab^L, y)$ that measures the difference between prediction $\ab^L$ and output $y$ (e.g., the squared loss: $(\ab^L - y)^2$). SGD consists of the following sequence of steps:
  (1) The \emph{randomization step}: sample a random single input $\ab_i^0$, 
  (2) The \emph{forward pass}: pass $\ab_i^0$ through the network to produce prediction $\ab_i^L$,
  (3) The \emph{backward pass}: compute the gradients $\Gb^l$ and $\eb^l$ of the loss $\ell(\ab^L_i, y_i)$ with respect to each weight $\Wb^l$ and layer activation $\ab^l$ in the network, respectively: $\Gb^l = \frac{\partial \ell(\ab^L_i, y_i)}{\partial \Wb^l},   \eb^l = \frac{\partial \ell(\ab^L_i, y_i)}{\partial \ab^l}, \forall l\in[L]$, (4) The \emph{update step}: update each weight by this gradient: $\Wb_l = \Wb_l - \eta \Gb^l$, where $\eta$ is a constant called the \emph{learning rate}. 
These four steps, called an {\em iteration} are repeated for each input-output pair in the training set. A pass over the whole training set is called an \emph{epoch}. 
The first step is often generalized to sample a set of inputs, called a \emph{batch}, as this exploits the parallel nature of modern GPU hardware and has benefits in terms of convergence and generalization.

\subsubsection{State-Of-The-Art Training: Normalization \& Adaptive Step-Sizes.}
While the above ``vanilla'' gradient descent can produce reasonably accurate classifiers, alone they do not produce state-of-the-art results. Two critical changes are necessary: (1) normalization; and (2) adaptive step-sizes. We describe each of these in detail.

\myparagraph{Normalization.}
The first normalization technique introduced for modern DNNs was \emph{batch normalization}~\cite{ioffe2015batch}. It works by normalizing the activations $\ab^l$ of a given layer $l$ so that across a given batch of inputs $\ab^l$ has roughly zero mean and unit standard deviation. There is now a general consensus in the machine learning community that normalization is a key ingredient to accurate DNNs~\cite{bjorck2018understanding}. Since batch normalization, there have been a number of other successful normalization schemes including \emph{weight normalization}~\cite{salimans2016weight} and \emph{layer normalization}~\cite{ba2016layer}. The intuition behind why normalization helps is because it prevents the activations $\ab^l$ from growing too large to destabilize the optimization, especially for DNNs with many layers that have many nested multiplications. This means that one can increase the size of the learning rate $\eta$ (see the \emph{update step} above for how the learning rate is used in SGD), which speeds up optimization~\cite{bjorck2018understanding}. Normalization has been shown to yield speedups of an order of magnitude over non-normalized networks. Further, without normalization, not only is convergence slower, in some cases one cannot even reach lower minima~\cite{ba2016layer}. 

\myparagraph{Adaptive Step-Sizes.}
While normalization allows one to use larger learning rates $\eta$, it is still unclear how to choose the correct learning rate for efficient training: too small and the network takes an impractical amount of time to converge, too large and the training diverges. To address this 
there has been a very significant research effort into designing optimization procedures that adaptively adjust the learning rate during training \cite{duchi2011adaptive,kingma2014adam,reddi2018convergence}. So-called \emph{adaptive gradient methods} scale the learning rate by the magnitude of gradients found in previous iterations. This effectively creates a per-dimension step-size, adjusting it for every entry of the gradient $\Gb^l$ for all layers $l$. Without adaptive gradient methods, finding the right step-size for efficient convergence is extremely difficult. A prominent state-of-the-art adaptive gradient method is AMSgrad \cite{reddi2018convergence}. It was developed to address pitfalls with prior adaptive gradient methods. The authors demonstrate that it is able to converge on particularly difficult optimization problems that prior adaptive methods cannot. Even on datasets where prior adaptive methods converge, AMSgrad can cut the time to converge in half.

\subsubsection{Training and Inference with Fixed-Point Numbers.}
Motivated by the need to deploy DNNs on embedded and mobile devices with limited memory and power, significant research effort has been devoted to model quantization and compression. Often the goal is to rely solely on fixed-point encoding of real numbers. In fact, Tensorflow offers a lightweight variant to address this goal~\cite{tensorflowlite, tensorflowlitepaper}. 
These developments are useful for secure prediction. This is because cryptographic techniques scale with the circuit representation of the function being evaluated, and so a floating-point encoding and subsequent operations on that encoding are extremely costly. 
However, for the task of training, there are few works that perform all operations in fixed-point \cite{gupta2015deep,miyashita2016convolutional,koster2017flexpoint,wu2018training,hou2019analysis}. 
We start by reviewing fixed-point encodings, and the MPC techniques that we will consider. Then, in Section~\ref{section:ml}, we describe how this method can be modified and improved for secure training.

\myparagraph{Notation for Fixed-Point Encodings.}
We represent fixed-point numbers as a triple $x = (a, \ell, p)$, where $a\in \{-2^{\ell-1}-1,\ldots,2^{\ell-1}\}$ is the fixed-point integer, $\ell$ is its \emph{range} or \emph{bit-width}, and $p$ is its \emph{precision}. The rational number encoded by $x$ is $a/2^p$. For simplicity, we will often make $\ell$ implicit and write the rational $a/2^p$ as $a_{(p)}$. 
As our computations will be over $\ell$-bit values in 2's complement, overflows/underflows can happen, resulting in large big errors. This requires that our training procedure is very stable, within a very controlled range of potential values.

\subsection{MPC for Machine Learning}

In this section, we introduce MPC techniques, highlighting the trade-offs that inspire our design of secure training and prediction protocols. 
\label{subsec:mpc}
The goal of MPC protocols is to compute a public function $f(\cdot)$ on private data held by different parties. The computation is done in a way that reveals the final output of the computation to intended parties, and nothing else about the private inputs. MPC protocols work over a finite discrete domain, and thus the function $f(\cdot)$ must be defined accordingly. 
Generally, MPC protocols can be classified depending on
(i) a type of structure used to represent $f(\cdot)$ (generally either Boolean or integer-arithmetic circuits) 
and (ii) a scheme to secretly share values between parties, namely 
Boolean-sharing, additive-sharing,
Shamir-secret-sharing, and more complex variants (for more information see \cite{demmler_aby_2015}). 
The choices (i) and (ii) define the computational properties
of an MPC protocol. 

Additive-sharing protocols are very efficient for computations over large integral domains that do not involve comparisons. This includes sequences of basic linear algebra operations, such as matrix-vector multiplications. Specifically, additions are extremely cheap as they can be performed locally, while multiplications are more expensive. On the other hand, computations involving comparisons require computing a costly bit-decomposition of the values.

In contrast to additive-sharing, protocols based on Boolean-sharing are well-suited for computations easily represented as Boolean circuits, such as division/multiplication by powers of two (via bit shifting), comparisons, and $\texttt{sign}()$. They are slower at addition and multiplication which require adder and multiplier circuits.

These trade-offs lead to a natural idea recently exploited in several works in the secure computation for ML: one should design protocols that are customized to full algorithms, such as the training of linear/logistic regressors and DNNs~\cite{nikolaenko_privacy-preserving_2013, petsregression, mohassel2017secureml}, matrix factorization~\cite{nikolaenko_privacy-preserving_2013b}, or $k$-nearest neighbor classification~\cite{knn, knnwithgc}.
Moreover, custom protocols alternate between different
secret-sharing schemes as required by the specific computation being implemented. 
Of course, the transformations between secret-sharing schemes must themselves
be implemented by secure protocols\footnote{This aspect was thoroughly
investigated in~\cite{demmler_aby_2015}.}.
This point is especially relevant to DNN training, as it amounts to
a sequence of linear operations (which are naturally represented as arithmetic circuits)
interleaved with evaluations of non-linear activation functions such as the $\relu$ (which are naturally represented as Boolean circuits). 

Some MPC frameworks work in the so-called \emph{pre-processing model} (see~\cite{mohassel2017secureml}), where computation is split into a data-independent offline phase and a data-dependent online phase. Random values useful for multiplication can be generated offline, and then used for fast secure multiplication online.
In this paper we consider total time, removing the assumption of an offline phase. This is not a fundamental limitation, as we have variants of our protocols that work in the pre-processing model as explained later.

\myparagraph{Notation for Secret-Sharing.} In this work, we focus on the two-party computation setting, which excludes solutions that rely on a majority of honest parties~\cite{bogdanov2008sharemind, mohassel2018aby}. Inspired by work on function-specific protocols, in this work we employ both Boolean sharing and additive sharing. 
We start by fixing two parties $\A$ and $\B$. 
We denote the \emph{Boolean-share} of $x \in \{0,1\}$ held by $\A$ as $\xorshare{x}_1$, and $\xorshare{x}_2$ for $\B$.
In Boolean-sharing, the shares satisfy $\xorshare{x}_1 = x\oplus \xorshare{x}_2$, where $\xorshare{x}_2$ is a random bit, and $\oplus$ signifies the \texttt{XOR} operation. 
We denote the \emph{additive-share} of integer $y \in \mathbb{Z}_q$ held by $\A$ as $\share{y}_1$, and $\share{y}_2$ for $\B$. Here $\share{y}_1 = y - \share{y}_2$, with random $\share{y}_2 \in \mathbb{Z}_q$. In practice, $q$ is $2^{\sigma}$, for $\sigma\in\{8,16,32,64,128\}$, as these are word lengths offered in common architectures. 

\myparagraph{Garbled Circuits.}
In many of our protocols, we use Yao's Garbled Circuits~\cite{yao1986generate} as a subprotocol. In this protocol, the computation is represented as a Boolean circuit. We will not introduce the protocol in detail here, and instead,
refer the reader to~\cite{DBLP:journals/joc/LindellP09} for a detailed presentation and security analysis. The relevant observation
to our work is the fact that the running time of a Garbled Circuits protocol is a function of the number of non-XOR gates in a
circuit (this is thanks to the Free-XOR technique~\cite{DBLP:conf/icalp/KolesnikovS08}). Consequently, designing efficient, largely XOR circuits is crucial in this setting.
In fact, it is so important that previous works have used digital synthesizers for this task~\cite{DBLP:conf/ccs/DemmlerDKS0Z15}. 

\myparagraph{Oblivious Transfer.} 
OT is a cryptographic primitive involving two parties: a {\em Chooser}, and a {\em Sender}. The Sender holds two messages $m_0, m_1$, and the Chooser holds a Boolean value $b$. After the execution of OT, the Chooser learns $m_b$, i.e. the Sender's message corresponding to their Boolean value. From the privacy perspective, an OT protocol is correct if it guarantees that (i) the Chooser learns nothing about $m_{1-b}$, and (ii) the Sender learns nothing about $b$. As common in MPC, this is formalized in the simulation framework (see~\cite{goldreichbook} for details).

OT is a basic primitive in MPC. In fact, any function can be evaluated securely using only an OT protocol~\cite{DBLP:conf/stoc/GoldreichMW87} and, moreover OT is a crucial component of Yao's Garbled Circuits.
A remarkable advancement in the practicality of OT protocols was the discovery of OT extension~\cite{DBLP:conf/crypto/IshaiKNP03}. This protocol allows one to compute a small number of OTs, say $128$, and then bootstrap them to execute many fast OTs. Since then, optimizations in both the base OTs and the OT extension procedure~\cite{asharov2013more} have led to implementations that can perform over ten million OT executions in under one second~\cite{emp-toolkit}.

In our protocols, we employ a more efficient primitive---Correlated Oblivious Transfer (COT). COT was introduced in~\cite{asharov2013more} alongside with an efficient COT Extension protocol that uses roughly half the communication than the general OT Extension. COT is a particular case of OT where the Sender does not get to choose its messages, but instead chooses a function $f$ relating the two messages $m_0$ and $m_1$, as $m_0  = f(m_1)$. This functionality is enough for important applications of OT in MPC such as Garbled Circuits and OT-based triplet generation (see~\cite{demmler_aby_2015}). Below we define the flavour of COT that we need in our application, following the notation from~\cite{asharov2013more}.

\begin{definition}[$m\times\cot{\ell}$]
Let $\vec{f} = (f_i)_{i \in [m]}$, be a sequence of {\em correlation functions}, each with signature $f_i:\{0,1\}^\ell \mapsto \{0,1\}^\ell$ held by party $\A$ (the Sender), and let $\vec{w}\in\{0,1\}^\ell$ be a sequence of {\em choice bits} held by party $\B$ (the Chooser).
After an execution of $m\times\cot{\ell}(\vec{f}, \vec{w})$, $\A$ obtains $m$ \emph{random} vectors $(\vec{x}_i\in\{0,1\}^\ell)_{i=1}^m$, and $\B$ obtains $m$ vectors $\b(\vec{y}_i\in\{0,1\}^\ell)_{i=1}^m$ such that $\forall i\in[m]: y_i = (\neg \vec{w}_i)\cdot \vec{x}_i + \vec{w}_i\cdot f_i(\vec{x}_i)$.
\end{definition}

\subsection{Threat Model}
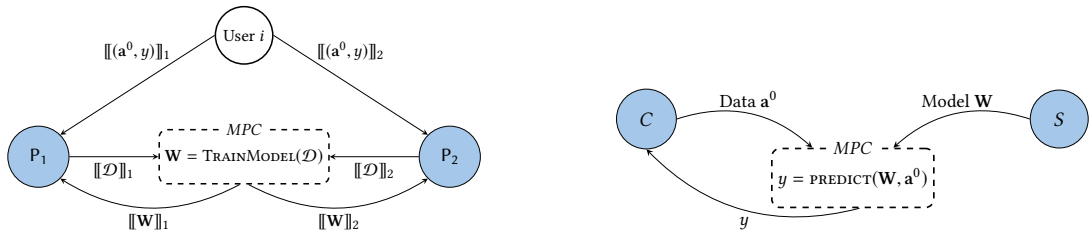
\begin{figure*}
\begin{minipage}[b]{0.45\textwidth}
\flushright
  \resizebox{0.8\textwidth}{!}{
  \begin{tikzpicture}

    \node[circle, thick, draw] (u3) at (5, 0) {User $i$};

    \node[circle, draw, fill=distinct11!66, inner sep=7pt] (p1) at ($(u3.south)+(-3.8,-1.7)$) {
        {\large $\A$}
    };

    \node[circle, draw, fill=distinct11!66, inner sep=7pt] (p2) at ($(u3.south)+(3.8,-1.7)$) {
        {\large $\B$}
    };

    \path[arrow] ($(u3.west)$) edge [] node[above=.35] {$\share{(\ab^0,y)}_1$} ($(p1.north east)$);
    \path[arrow] ($(u3.east)$) edge [] node[above=.35] {$\share{(\ab^0,y)}_2$} ($(p2.north west)$);

    \node[roundbox, draw, dashed, thick, fill=white, minimum size=1cm] (mpc) at ($(u3.south)+(0,-1.7)$) {
         $\Wb = \textsc{TrainModel}(\mathcal{D})$
    };

    \node[fill=white] (mpctext) at ($(u3.south)+(0,-1.2)$) {$MPC$};

    \path[arrow] ($(p1.east)$) edge [] node[below] {$\share{\mathcal{D}}_1$} ($(mpc.west)$);
    \path[arrow] ($(p2.west)$) edge [] node[below] {$\share{\mathcal{D}}_2$} ($(mpc.east)$);

    \path[arrow] ($(mpc.south west)+(1.5, 0)$) edge [bend left] node[below] {$\share{\Wb}_1$} ($(p1.south east)$);
    \path[arrow] ($(mpc.south east)+(-1.5, 0)$) edge [bend right] node[below] {$\share{\Wb}_2$} ($(p2.south west)$);

  \end{tikzpicture}
  }
\end{minipage}~~~~~~~~~~
\begin{minipage}[b]{0.45\textwidth}
\flushright
  \resizebox{0.8\textwidth}{!}{
  \begin{tikzpicture}

    \node[] (u3) at (5, 0) {};

    \node[circle, draw, fill=distinct11!66, inner sep=7pt] (p1) at ($(u3.south)+(-3.5,-1)$) {
        {\large $C$}
    };

    \node[circle, draw, fill=distinct11!66, inner sep=7pt] (p2) at ($(u3.south)+(3.5,-1)$) {
        {\large $S$}
    };

    \node[roundbox, draw, dashed, thick, fill=white, minimum size=1cm] (mpc) at ($(u3.south)+(0,-2)$) {
         $y = \textsc{predict}(\Wb, \ab^0)$
    };

    \node[fill=white] (mpctext) at ($(u3.south)+(0,-1.5)$) {$MPC$};

    \path[arrow] ($(p1.east)$) edge [bend left] node[above] {Data $\ab^0$} ($(mpc.north west)+(.7, 0)$);
    \path[arrow] ($(p2.west)$) edge [bend right] node[above] {Model $\Wb$} ($(mpc.north east)-(.7, 0)$);

    \path[arrow] ($(mpc.south west)+(1.5, 0)$) edge [bend left] node[below] {$y$} ($(p1.south)$);

  \end{tikzpicture}
  }
\end{minipage}
    \caption{(Left) Training in the two-server model of MPC: (i) Each user $i$ shares their labeled data $(\ab^0_i, y_i)$ across two servers $\A$ and $\B$, by giving one share $\share{(\ab^0_i, y_i)}_j$ to each $\mathsf{P}_j$.
  (ii) Each $\mathsf{P}_j$ compiles their share of a training dataset $\mathcal{D}$, by simply accumulating all shares received from users. Finally, (iii) $\A$ and $\B$ engage in a multi-party computation, in which $\mathcal{D}$ is securely reconstructed, and subsequently used to train a model $\Wb$, from which each server gets a share.
     (Right) Private prediction using MPC: A client $C$ and a server $S$ engage in an multi-party computation protocol for the client to obtain $y$ (the prediction of the server's model $\Wb$ for their client's data $\ab^0$) without the parties disclosing anything about $\Wb$ and $\ab^0$ to each other.
   }
  \label{fig:setting}
\end{figure*}
In the two-party model of MPC, the training procedure is outsourced to two servers. This framework has been used in several previous works~\cite{mohassel2017secureml, nikolaenko_privacy-preserving_2013, nikolaenko_privacy-preserving_2013b, DBLP:conf/sp/NayakWIWTS15, petsregression, blindjustice}. It works by first secret-sharing the training dataset $\mathcal{D}$ across the servers. This is depicted in Figure~\ref{fig:setting} (Left): users secret-share their values $(\ab^0_i,y_i)$ across two non-colluding servers, which run the training procedure and obtain the resulting secret-shared model: $\share{\Wb}_1, \share{\Wb}_2$. 
Note that the scenario where two organizations collaborate to build a model of their respective private data in a privacy-preserving way is a particular case of this setting: this corresponds to the stage after the users have shared their data. We present fast MPC protocols to implement $\textsc{TrainModel}(\mathcal{D})$ from Figure~\ref{fig:setting} (Left). 
Thus, all our protocols are presented as two-party protocols between parties $\A$ and $\B$ via input and output additive-shares. Our protocols are secure in the semi-honest model for secure computation, as in \cite{mohassel2017secureml}.

Alongside private training, an important related problem is private prediction. This is depicted in Figure~\ref{fig:setting} (Right). Specifically, a client $C$ has private data $\ab^0$ for which they wish to have a private prediction $y$, via the private weights $\Wb$ of server $S$. In our setting, 
a fast protocol for two-party training in MPC immediately yields a fast protocol for private prediction, as prediction corresponds to the forward pass in training (i.e., Algorithm~\ref{alg:fp_fp}).

\section{Deep Learning for MPC}
\label{section:ml}

\label{sec:ml}

In this section, we describe new methods to optimize DNNs that were developed alongside our protocols (in Section~\ref{section:crypto}). Our first insight is that recent work on training deep networks in fixed-point \cite{wu2018training} can be leveraged for crypto-friendly training. Namely, while this work was originally intended for embedded devices, it contains useful primitives such as repeated quantization to low fixed-point precisions to stabilize optimization. However, out-of-the-box, this work is unsuited for privacy-preserving protocols. We make the following modifications: 
(a) We ternarize the network weights: $\Wb \in \{-1,0,1\}$ during the forward and backward passes. This will allow matrix multiplication with $\Wb$ to be phrased as repeated 1-out-of-2 oblivious transfers (described in Section~\ref{section:crypto}); 
(b) We construct an MPC-friendly quantization function for the weight gradients in the backward pass, replacing a biased coin flip and truncation steps with a saturation-free quantization, without loss in accuracy;
(c) We replace the backward pass normalization operation, division by the closest-power-of-two, with the \emph{next}-power-of-two. While a seemingly small change, the latter operation has a very efficient circuit implementation~\cite{DBLP:books/aw/Warren2013}  that we leverage in our secure protocols in the next section. Further, we observe empirically in Section~\ref{sec:experiments} that this change also has no effect on accuracy. Ultimately, these changes will only speed up the computation of a training iteration in a secure protocol. If training is based on stochastic gradient descent, many iterations will still be necessary for convergence. To address this, we design a new fixed-point adaptive gradient algorithm. It is inspired by a state-of-the-art floating-point adaptive gradient method, AMSgrad \cite{reddi2018convergence}. This optimization allows us to achieve the best accuracy to date on all $5$ datasets we consider for DNNs trained in fixed-point (in Figure~\ref{fig:MPCTricks}). In this section, we describe the work of \cite{wu2018training}, our changes (a-c) mentioned above, and our new fixed-point adaptive gradient training procedure.

\subsection{Deep Models in Fixed-Point}
\label{section:discrete-learning}
The work by \citet{wu2018training} describes an efficient optimization procedure for DNNs operating entirely on fixed-point encodings of real numbers. We describe it in detail here.

\myparagraph{Quantize gradients, and quantize frequently.}
The first idea of WAGE~\cite{wu2018training} is to introduce functions $Q_{w}, Q_{a}, Q_{g}, Q_{e}$ that quantize the \emph{weights} $\Wb$, \emph{activations} $\ab$, \emph{weight gradients} $\Gb$, \emph{activation gradients} $\eb$ to a small, finite set of fixed-point numbers. While previous work \cite{kim2016bitwise,hubara2016binarized,rastegari2016xnor,alistarh2017qsgd,li2017training,wen2017terngrad,bernstein2018signsgd,de2018high} had already introduced the idea of functions $Q_{w},Q_{a}$ to quantize weights and/or activations in the forward pass, they required the weights $\Wb$ or gradients $\Gb,\eb$ to be represented in floating-point in the backward pass, in order to optimize accurately. 
In \citet{wu2018training}, all the quantization functions take fixed-point numbers with some precision $p$, i.e., $v_{(p)} = v/2^p$ and find the nearest number with another precision $q$, i.e., $v_{(q)}$:
\begin{align}
v_{(q)} = N(v_{(p)}, q) = \frac{\lfloor \frac{v}{2^p}2^q \rceil}{2^q} 
\end{align}
where $\frac{a}{2^p}2^q$ is in practice either a division of a multiplication by $2^{|p-q|}$ depending on whether $p>q$. 
Additionally, \citet{wu2018training} introduce a saturation function $S(\cdot)$, to yield $Q(\cdot)$ as:
\begin{align}\label{eq:Q}
v_{(q)} = Q(v_{(p)}, q) = S(N(v_{(p)}, q),q). 
\end{align}
where $S(x, q) = \min(\max(x, -1+2^{-q}), 1-2^{-q})$ saturates any $x$ to be within $[-1+2^{-q}, 1-2^{-q}]$.

\myparagraph{Weight Quantization $Q_{w}$.}
The first function $\Wb_{(p_w)} = Q_{w}(\Wb_{(p)}, p_w)$ takes as input a $p$-precision fixed-point weight $\Wb_{(p)}$ and returns the closest $p_w$-precision weights $\Wb_{(p_w)}$ using the above function:
\begin{align}
\Wb_{(p_w)} = Q_w(\Wb_{(p)}, p_w) = Q(\Wb_{(p)}, p_w) 
\end{align}

\myparagraph{Activation quantization $Q_{a}$.}
The second function $\ab_{(p_a)} = Q_{a}(\ab_{(p)}, p_a)$ is almost identical except it introduces an additional scaling factor $2^{-\alpha}$ as follows:
\begin{align}
\ab_{(p_a)} = Q_{a}(\ab_{(p)}, p_a) =  Q(\frac{\ab_{(p)}}{2^{\alpha}}, p_a).
\end{align}
Where $\alpha$ is a fixed integer determined by the dimensionality of $\Wb$, see \cite{wu2018training} eq.~(7). The intuition is that this is a simple form of normalization (used in the forward pass), and the authors describe a technique to set $\alpha$ based on the layer size.

\myparagraph{Activation gradient quantization $Q_{e}$.}
For the backward pass the first gradient we must consider is that of the activations, which we call $\eb$. The strategy to quantize $\eb_{(p_e)} = Q_{e}(\eb_{(p)}, p_e)$, is similar to quantizing the activations. Except here the scaling factor depends on the largest value of $\eb$:
\begin{align}
\eb_{(p_e)} = Q_{e}(\eb_{(p)}, p_e) = Q(\frac{\eb_{(p)}}{2^{\texttt{cpow}(\max\{| \eb_{(p)} |\})}}, p_e) \label{eq:e_quant}
\end{align}
where \texttt{cpow}$(x) = 2^{\lfloor \log_2(x) \rceil}$ is the closest power of $2$ to $x$. The intuition here is again that normalization helps, but it cannot be constant as the gradient magnitude can potentially fluctuate. Thus normalization needs to be data-specific to stabilize training.

\begin{algorithm2e}[t]
\DontPrintSemicolon
\SetKwComment{Comment}{{\scriptsize$\triangleright$\ }}{}
\caption{Fixed-Point Forward Pass (\texttt{Forward})}\label{alg:fp_fp}
        \KwIn{Fixed-point weights $\{\Wb^l_{(p_w)}\}_{l=1}^L$, \\ Fixed-point data sample $\ab^0_{(p_a)}$, \\ Layer-wise normalizers $\{\alpha_l\}_{l=1}^L$}
        \KwOut{Fixed-point activations $\{\ab^l_{(p_a)}\}_{l=1}^L$}
\BlankLine
\begin{minipage}{\hsize}
  \begin{algorithmic}[1]
      \FOR{$l = 1,\ldots,L$}
      \STATE $\ab^l = f(\Wb^l_{(p_w)} \ab^{l-1}_{(p_a)})$ \Comment*[r]{{\scriptsize pass through layer}} 
      \STATE $\ab^l_{(p_a)} = Q_{a}(\ab^l , p_a)$ \Comment*[r]{{\scriptsize activations precision $p_a$}} 
      \ENDFOR
\end{algorithmic}
\end{minipage}
\end{algorithm2e}

\begin{algorithm2e}[t]
\DontPrintSemicolon
\SetKwComment{Comment}{{\scriptsize$\triangleright$\ }}{}
\caption{Fixed-Point Backward Pass (\texttt{Backward})}\label{alg:fp_bp}
        \KwIn{Fixed-point weights: $\{\Wb^l_{(p_w)}\}_{l=1}^L$, \\Fixed-point activations (act) $\{\ab^l_{(p_a)}\}_{l=1}^L$, \\Fixed-point label $y_{(p_a)}$, \\
        Activation function $f$, \\
        Saturation function $S$}
        \KwOut{Fixed-point gradients $\{\Gb^l\}_{l=1}^L$}
\BlankLine
\begin{minipage}{\hsize}
  \begin{algorithmic}[1]
    \STATE $\eb^L = \frac{\partial \ell(\ab^L_{(p_a)}, y_{(p_a)})}{\partial \ab^L_{(p_a)}}$ \Comment*[r]{{\scriptsize act.\ gradient}}
      \FOR{$l = L,\ldots,1$}
      \STATE $\eb^l_{(p_e)} = Q_{e}(\eb^l, p_e)$ \Comment*[r]{{\scriptsize act.\ gradient precision $p_e$}} 
      \STATE $\eb^{l-1} = \Wb_{(p_w)}^l \Big(\eb^{l}_{(p_e)} \circ \frac{\partial \ab^{l}_{(p_a)} }{\partial f} \circ \frac{ \partial \ab^{l}_{(p_a)}}{\partial S}\Big)$  \Comment*[r]{{\scriptsize act.\ gradient}}
      \STATE $\Gb^{l} = {\ab^{l-1}_{(p_a)}}^\top\eb^l_{(p_e)}$ \Comment*[r]{{\scriptsize weight gradient}}
      \ENDFOR
\end{algorithmic}
\end{minipage}
\end{algorithm2e}

\myparagraph{Weight gradient quantization $Q_{g}$.}
The second gradient is the weight gradient $\Gb$, quantized by $\Gb_{(p_g)} = Q_{g}(\Gb_{(p)}, p_g)$, as follows: 
\begin{align}
\Gb_{(p_g)} = Q_{g}(\Gb_{(p)}^{n}, p_g) = \frac{\texttt{sign}(\Gb_{(p)}^{n})}{2^{p_g - 1}} \Bigg[ \lfloor | \Gb_{(p)}^{n} | \rfloor + \textrm{B}\Big(| \Gb_{(p)}^{n} | - \lfloor | \Gb_{(p)}^{n} | \rfloor\Big)   \Bigg]. \label{eq:bad_wg}
\end{align}
where $\Gb_{(p)}^n = \eta \Gb_{(p)}/2^{\texttt{cpow}(\max\{| \Gb_{(p)} |\})}$ is a normalized version of $\Gb_{(p)}$ that is also multiplied by the learning rate $\eta$, $\texttt{sign}(a)$ returns the sign of $a$, and $B(p)$ draws a sample from a Bernoulli distribution with parameter $p$. This function normalizes the gradient, applies a randomized rounding, and scales it down. The idea behind this function is that additional randomness (combined with the randomness of selecting data points stochastically during training) should improve generalization to similar, unseen data.
\subsection{\emph{Cryto-Friendly} Deep Models in Fixed-Point}
We propose three changes to the quantization functions that do not affect accuracy, but make them more suitable for MPC.

\myparagraph{Weight quantization $Q_{w}$.}
We fix $p_w\!=\!1$ and set $Q_w(\Wb_{(p)}, p_w) = 2Q(\Wb_{(p)}, p_w)$ so that our weights are ternary: $\Wb \in \{-1,0,1\}$ as mentioned above.

\myparagraph{Activation gradient quantization $Q_{e}$.}
We make the following change to eq.~\eqref{eq:e_quant} to $Q_{e}(\eb_{(p)}, p_e) = Q(\eb_{(p)}/2^{\texttt{npow}(\max\{| \eb_{(p)} |\})}, p_e)$, where $\texttt{npow}(x) = 2^{\lceil \log_2(x) \rceil}$ is the \emph{next} power of $2$ after $x$. This is faster than the closest power of $2$, $\texttt{cpow}(x)$ as the latter also needs to compute the previous power of $2$ and compare them to $x$ to find the closest power.

\myparagraph{Weight gradient Quantization $Q_{g}$.}
We introduce a different quantization function than \cite{wu2018training} which is significantly easier to implement using secure computation techniques: 
\begin{align}
\Gb_{(p_g)} = Q_{g}(\Gb_{(p)}, p_g) = N(\frac{\Gb_{(p)}}{2^{\texttt{npow}(\max\{| \Gb_{(p)} |\})}}, p_g).
\end{align}
We find that the original quantization function in eq.~\eqref{eq:bad_wg} 
needlessly removes information and adds unnecessary overhead to a secure implementation.

\begin{algorithm2e}[t]
\DontPrintSemicolon
\SetKwComment{Comment}{{\scriptsize$\triangleright$\ }}{}
\caption{Fixed-Point SGD Optimization ($\Delta_{\textrm{sgd}}$)}\label{alg:sgd}
    \KwIn{Fixed-point weights $\{\Wb^l_{(\overline{p}_w)}\}_{l=1}^L$, \\ Fixed-point data $\mathcal{D}=\{(\ab^0_{(p_a),i},y_{(p_a),i}\}_{i=1}^n$, \\ Learning rate $\eta$ }
        \KwOut{Updated weights $\{\Wb^l_{(\overline{p}_w)}\}_{l=1}^L$}
\BlankLine
\begin{minipage}{\hsize}
  \begin{algorithmic}[1]
  \FOR{$t = 1,\ldots,T$}
    \STATE $(\ab^0_{(p_a)}, y_{(p_a)}) \sim \mathcal{D}$
    \STATE $\{ \Wb^l_{(p_w)} = Q_{w}(\Wb^l_{(\overline{p}_w)}, p_w) \}_{l=1}^L$
    \STATE $\{\ab^l_{(p_a)}\}_{l=1}^L =$ \texttt{Forward}$(\{\Wb  ^l_{(p_w)}\}_{l=1}^L, \ab^0_{(p_a)})$ 
    \STATE $\{\Gb^l\}_{l=1}^L =$ \texttt{Backward}$(\{\Wb^l_{(p_w)}, \ab^l_{(p_a)}\}_{l=1}^L, y_{(p_a)})$
    \STATE $\{\Gb^{l}_{(p_g)} =  Q_{g}(\Gb^l, p_g) \}_{l=1}^L$\Comment*[r]{{\scriptsize weight gradient precision $p_g$}} 
    \STATE $\{\Wb_{(\overline{p}_w)}^l = S(\Wb^l_{(\overline{p}_w)} - \eta \Gb_{(p_g)}^l, \overline{p}_w)\}_{l=1}^L$ 
  \ENDFOR
\end{algorithmic}
\end{minipage}
\end{algorithm2e}

\myparagraph{Forward/Backward passes.}
Algorithm~\ref{alg:fp_fp} corresponds to prediction in DNNs, and Algorithms~\ref{alg:fp_fp},~\ref{alg:fp_bp},~\ref{alg:sgd} describe the training procedure. Apart from the quantization functions note that in the backward pass (Algorithm~\ref{alg:fp_bp})
we take the gradient of the activation with respect to the activation function: $\partial \ab^{l}_{(p_a)} / \partial f$ and the saturation function: $\partial \ab^{l}_{(p_a)} / \partial S$. Algorithm~\ref{alg:sgd} describes a stochastic gradient descent (SGD) algorithm for learning fixed-point weights inspired by \cite{wu2018training}. We keep around two copies of weights in different precisions $p_w$ and $\overline{p}_w$. The weights $\Wb_{(p_w)}$ are ternary and will enable fast secure forward (Algorithm~\ref{alg:fp_fp})
and backward passes. The other weights $\Wb_{(\overline{p}_w)}$ are at a higher precision and are updated with gradient information. We get the ternary weights $\Wb_{(p_w)}$ by quantizing the weights $\Wb_{(\overline{p}_w)}$ in line 3. 
Note that the forward and backward passes of convolutional networks can be written using the exact same steps as Algorithms~\ref{alg:fp_fp} and \ref{alg:fp_bp} where weights are reshaped to take into account weight-sharing. 
Similarly, for residual networks the only differences are: (a) line 2 in the forward pass changes; and 
(b) line 4 in the backward pass has an added term from prior layers.

\subsection{A Fixed-Point Adaptive Gradient Algorithm}
One of the state-of-the-art adaptive gradient algorithms is AMSgrad \cite{reddi2018convergence}. However, AMSgrad includes a number of operations that are possibly unstable in fixed-point: (i) the square-root of a sum of squares, (ii) division, (iii) moving average. 
We redesign AMSgrad in Algorithm~\ref{alg:amsgrad} to get around these difficulties in the following ways. First, we replace the square-root of a sum of squares with absolute value in line 9, a close upper-bound for values $\vb \in [-1,1]$. We verify empirically that this approximation does not degrade performance.

Second, we replace division of $\hat{\Vb}_{(p_v)}$ in line 11 with division by the next power of two. Third, we continuously quantize the weighted moving sums of lines 7 and 8 to maintain similar precisions throughout. These changes make it possible to implement AMSgrad in secure computation efficiently. In the next section we describe the protocols we design to run Algorithms~\ref{alg:fp_fp}-\ref{alg:amsgrad} privately.

\begin{algorithm2e}[t]
\DontPrintSemicolon
\SetKwComment{Comment}{{\scriptsize$\triangleright$\ }}{}
\SetKwInOut{State}{State}
\caption{Fixed-Point AMSgrad Optim.\ ($\Delta_{\textrm{ams}}$)}\label{alg:amsgrad}
		\KwIn{Fixed-point weights $\{\Wb^l_{(\overline{p}_w)}\}_{l=1}^L$, \\ Fixed-point data $\mathcal{D}=\{(\ab^0_{(p_a),i},y_{(p_a),i}\}_{i=1}^n$, learning rate $\eta$}
        \KwOut{Updated weights $\{\Wb^l_{(\overline{p}_w)}\}_{l=1}^L$}
\BlankLine
\begin{minipage}{\hsize}
  \begin{algorithmic}[1]
  	\STATE Initialize: $\{\Mb^{l}, \Vb^{l}\}_{l=1}^L =0$
  	\FOR{$t = 1,\ldots,T$}
  		\STATE $(\ab^0_{(p_a)}, y_{(p_a)}) \sim \mathcal{D}$
  		\STATE $\{ \Wb^l_{(p_w)} = 2Q(\Wb^l_{(\overline{p}_w)}, p_w) \}_{l=1}^L$
  		\STATE $\{\ab^l_{(p_a)}\}_{l=1}^L =$ \texttt{Forward}$(\{\Wb	^l_{(p_w)}\}_{l=1}^L, \ab^0_{(p_a)})$ 
  		\STATE $\{\Gb^l\}_{l=1}^L =$ \texttt{Backward}$(\{\Wb^l_{(p_w)}, \ab^l_{(p_a)}\}_{l=1}^L, y_{(p_a)})$
		\STATE $\{\Gb^l = \frac{ \Gb^l }{ 2^{\texttt{npow}(\max\{| \Gb^l |\})} } \}_{l=1}^L$ \Comment*[r]{{\scriptsize scale gradients}}
		\STATE $\{\Mb^l = N(0.9 \Mb^l, p_m) + 0.1 \Gb^l \}_{l=1}^L$ \Comment*[r]{{\scriptsize weighted mean}}
		\STATE $\{\Vb^l = N(0.99 \Vb^l, p_v) + 0.01 \big|\Gb^l\big| \}_{l=1}^L$ 
		\STATE $\{ \hat{\Vb}^l = \max(\hat{\Vb}^l, \Vb^l) \}_{l=1}^L$
		\STATE $\{\Gb^l = \frac{\Mb^l}{2^{\texttt{npow}( |\hat{\Vb}^l| + \epsilon )} } \}_{l=1}^L$ \Comment*[r]{{\scriptsize history-based scaling }}
  		\STATE $\{\Gb^{l}_{(p_g)} = N(\Gb^l, p_g)\}_{l=1}^L$ \Comment*[r]{{\scriptsize weight gradient precision $p_g$}}
  		\STATE $\{\Wb_{(\overline{p}_w)}^l = S(\Wb^l_{(\overline{p}_w)} - \eta \Gb_{(p_g)}^l, \overline{p}_w)\}_{l=1}^L$ 
  	\ENDFOR
\end{algorithmic}
\end{minipage}
\end{algorithm2e}

\section{Oblivious Transfer for Secure Learning}
\label{section:crypto}
\label{sec:protocols}
In this section, we present custom MPC protocols to implement private versions of the algorithms
introduced in the previous section. Our protocols rely heavily on OT, and crucially exploit characteristics of our networks such as ternary weights and fixed-point (8-bit precision) gradients. We present our MPC protocol for neural network training by describing its components separately. First, in Section~\ref{sec:mv-mult-prot} we describe our protocol for ternary matrix-vector multiplication, a crucial primitive for training used both in the forward and backward pass. Next, in Sections~\ref{sec:fw-pass-prot} and~\ref{sec:bw-pass-prot} we describe the protocols for the forward and backward passes, respectively. These protocols use our matrix-vector multiplication protocol (and a slight variant of it), in combination with efficient garbled circuit implementations for normalization and computation of $\relu$.

\subsection{Secure Ternary Matrix-Vector Multiplication}\label{sec:mv-mult-prot}

A recurrent primitive, both in the forward and backward passes of the fixed-point neural networks from Section~\ref{section:ml} is the product $\Wb\ab$, for ternary matrix $\Wb \in \{-1, 0, 1\}^{n\times m}$ and fixed-point integer vector $\ab\in \mathbb{Z}^m_q$. Several previous works on MPC-based ML have looked specifically at matrix-vector multiplication~\cite{mohassel2017secureml, knn}. As described in Section~\ref{subsec:mpc}, multiplication is a costly operation in MPC. Our insight is that if $\Wb$ is ternary, we can replace multiplications by selections, enabling much faster protocols. More concretely, we can compute the product $\vec{z} = \mat{W}\ab$ as shown in Algorithms~\ref{alg:ternary-mv}.
\begin{algorithm2e}
\DontPrintSemicolon
\caption{Ternary-Integer Matrix-Vector Product}\label{alg:ternary-mv}
\KwIn{Matrix $\mat{W}\in\{-1, 0, 1\}^{n \times m}$ and vector $\vec{a}\in\mathbb{Z}_q^m$}
\begin{minipage}{\hsize}
  \begin{algorithmic}[float, plain]
    \STATE $\vec{z} = (0)_{i\in[n]}$        
    \FOR{$i\in [n], j\in[m]$}
    \STATE \algorithmicif\ $\mat{W_{i,j}} > 0$ \algorithmicthen\ $z_i~\mathrel{+}=~\ab_j$
    \STATE \algorithmicif\ $\mat{W_{i,j}} < 0$ \algorithmicthen\ $z_i~\mathrel{-}=~\ab_j$
    \ENDFOR
    \RETURN $\vec{z}$
\end{algorithmic}
\end{minipage}
\end{algorithm2e}

A natural choice for implementing the functionality in Algorithm~\ref{alg:ternary-mv} securely are MPC protocols that represent the computation as a Boolean circuit. In our two-party setting natural choices are garbled circuits and the GMW protocol \cite{DBLP:conf/stoc/GoldreichMW87}. In Boolean circuits, the If-Then-Else construction corresponds to a multiplexer, and a comparison with $0$ is essentially free: it is given by the sign bit in a two's complement binary encoding. However, a circuit implementation of the computation above will require $|\mat{W}|$ additions and the same number of subtractions, which need to implemented with full-adders. Whereas, additions, if computed on additive shares, do not require interaction and thus are extremely fast.

Our proposed protocol achieves the best of both worlds by combining Boolean sharing and additive sharing. To this end, we represent the ternary matrix $\mat{W}$ by two Boolean matrices $\mat{W^+}\in\{0,1\}^{n \times m}$ and $\mat{W^-}\in\{0,1\}^{n \times m}$, defined as $\Wb^+_{i,j} = 1 \Leftrightarrow \mat{W}_{i,j} = 1$ and $\Wb^-_{i,j} = 1 \Leftrightarrow \mat{W}_{i,j} = -1$. Now, the product $\mat{W} \ab$ can be rewritten as $\mat{W^+}\vec{a} - \mat{W^-}\vec{a}$. This reduces our problem from $\mat{W}\vec{a}$ with ternary $\mat{W}$ to two computations of $\mat{W}\vec{a}$ with binary $\mat{W}$. Note that we can use the same decomposition to split any matrix with inputs in a domain of size $k$ into $k$ Boolean matrices, and thus our protocol is not restricted to the ternary case.

Accordingly, at the core of our protocol is a two-party subprotocol for computing additive shares of $\mat{W}\vec{a}$, when $\mat{W}$ is Boolean-shared among the parties and $\vec{a}$ is additively shared. In turn, this protocol relies on a two-party subprotocol for computing additive shares of the inner product of a Boolean-shared binary vector and an additively shared integer vector. As a first approach to this problem, we show in Protocol~\ref{prot:bool-bcomb} a solution based on oblivious transfer. We state and prove the correctness of Protocol~\ref{prot:bool-bcomb} in Appendix~\ref{sec:appendix-proof}.

One can think of Protocol~\ref{prot:bool-bcomb} as a component-wise multiplication protocol, as we will use it also for that purpose. The only modification required is in step $6$, i.e. the local aggregation of additive shares of the result of component-wise multiplication. We could directly use this protocol to implement our desired matrix-vector multiplication functionality, and this leads to very significant improvements due to the concrete efficiency of OT Extension implementations. However, we can further optimize this to obtain our final protocol.

\begin{protocol}[t]
\DontPrintSemicolon
\caption{Boolean-Integer Inner Product}\label{prot:bool-bcomb}
        {\bf Parties:}~ $\A$ and $\B$\\
        \KwIn{Arithmetic shares of integer vector $\vec{a}\in\mathbb{Z}_q^m$ and Boolean shares of binary vector $\vec{w}\in\{0, 1\}^m$}
        \KwOut{Arithmetic shares of $z = \vec{w}^\top \vec{a}$}
\BlankLine
\begin{minipage}{\hsize}
  \begin{algorithmic}[1]
    \STATE Each $\partyi$ generates random values $(z_{i, j})_{j\in[m]}$.
    \FOR{$j\in[m]$}
    \STATE $\partyi$ sets\\\vspace{-1ex}\hspace{0.2\columnwidth}$m_{i, 0} := \xorshare{\vec{w}_j}_i\cdot \share{\vec{a}_j}_i - z_{i, j}$
    \\\hspace{0.2\columnwidth}$m_{i, 1} := \neg\xorshare{\vec{w}_j}_i\cdot \share{\vec{a}_j}_i - z_{i, j}$\\[1ex]
    \STATE $\A$ and $\B$ run $\ot(m_{1, 0}, m_{1, 1}, \xorshare{\vec{w}_j}_2)$, with $\A$ as Sender and $\B$ as Chooser, for $\B$ to obtain $z_{1, j}'$
    \STATE $\A$ and $\B$ run $\ot(m_{2, 0}, m_{2, 1}, \xorshare{\vec{w}_j}_1)$, with $\B$ as Sender and $\A$ as Chooser, for $\A$ to obtain $z_{2, j}'$
    \ENDFOR
    \STATE Each $\partyi$ sets $\share{z}_i = \sum_{j\in[m]}(z_{i,j} + z_{i,j}')$.
\end{algorithmic}
\end{minipage}
\end{protocol}

The use of OT in Protocol~\ref{prot:bool-bcomb} has similarities with the GMW protocol, and is inspired by the OT-based method due to Gilboa for computing multiplication triplets, and discussed in~\cite{demmler_aby_2015}. A similar idea was used in the protocol for computing the sigmoid function by~\citet{mohassel2017secureml}. Moreover, concurrently to this work, \citet{riazi2019xonn} have proposed OT-based protocols for secure prediction using DNNs. They propose a protocol called Oblivious Conditional Addition (OCA) that is analogous to Protocol~\ref{prot:bool-bcomb}. While their work only addresses secure prediction, our improved protocol presented in the next section is relevant in their setting as well.

\subsubsection{Optimizations: Correlated OT and Packing.}

In the previous section, our protocol assumed a standard OT functionality, but actually, we can exploit even more efficient primitives. Our optimization of Protocol~\ref{prot:bool-bcomb}, presented in Protocol~\ref{prot:bool-bcomb-cot} exploits COT in a way to implement our required inner product functionality. 
The idea behind Protocol~\ref{prot:bool-bcomb-cot} is simple: note that in Protocol~\ref{prot:bool-bcomb} parties choose their random shares $z_{i,j}$ of intermediate values in the computation, and they use them to mask OT messages. However, as we only require the $z_{i,j}$'s to be random, one could in principle let the OT choose them. Note also that the parties can choose their share of the result as a function of their inputs, which can be implemented in COT. This is done in lines $4$ and $5$ of Protocol~\ref{prot:bool-bcomb-cot}. The next Lemma states the correctness of Protocol~\ref{prot:bool-bcomb-cot}. As the protocol consists of just two executions of $m\times\cot{\ell}$ and local additions its security is trivial, while we present the proof of correctness in Appendix~\ref{app:proofbool-bcomb-cot}. Finally, note that, as in the case of Protocol~\ref{prot:bool-bcomb}, it is easy to turn Protocol~\ref{prot:bool-bcomb-cot} into a protocol for component-wise multiplication.

\begin{protocol}[t]\label{}
\DontPrintSemicolon
\caption{Boolean-Integer Inner Product via $m \!\times\! \cot{\ell}$}\label{prot:bool-bcomb-cot}
        {\bf Parties:}~ $\A$ and $\B$\\
        \KwIn{$\ell$-bit arithmetic shares of integer vector $\vec{a}\in\mathbb{Z}_q^m$ and Boolean shares of binary vector $\vec{w}\in\{0, 1\}^m$}
        \KwOut{Arithmetic shares of $z = \vec{w}^\top \vec{a}$}
\BlankLine
\begin{minipage}{\hsize}
  \begin{algorithmic}[1]
        \STATE  Each party $\partyi$ constructs a vector of correlation functions
        $\vec{f^{~i}} = (f_{i,j}(x))_{j\in[m]}$,
        \vspace{-1ex} $$f_{i,j}(x) = x - \share{\vec{w}_j}_i\cdot \share{\vec{a}_j}_i + \neg \share{\vec{w}_j}_i\cdot \share{\vec{a}_j}_i $$\vspace{-3ex}
        \STATE The parties run $m \times \cot{\ell}(\vec{f^{~1}}, \share{\vec{w}_j}_2)$ with $\A$ acting as the Sender, and $\A$ obtains $\vec{x}$ while $\B$ obtains $\vec{y}$.
        \STATE The parties run $m \times \cot{\ell}(\vec{f^{~2}}, \share{\vec{w}_j}_1)$ with $\B$ acting as the Sender, and $\B$ obtains $\vec{x'}$ while $\A$ obtains $\vec{y'}$.
        \STATE $\A$ sets $\share{z}_1 = \sum_{j\in[m]}(\share{\vec{w}_j}_1\cdot\share{\vec{a}_j}_1 - \vec{x}_j + \vec{y'}_j)$
        \STATE $\B$ sets $\share{z}_2 = \sum_{j\in[m]}(\share{\vec{w}_j}_2\cdot\share{\vec{a}_j}_2 - \vec{x'}_j + \vec{y}_j)$
\end{algorithmic}
\end{minipage}
\end{protocol}

\begin{lemma}
Let $\vec{w}$ and $\vec{a}$ be a Boolean and integer vector, respectively, shared among parties $\A, \B$.
Given an $m\times\cot{\ell}$ protocol, the two-party Protocol~\ref{prot:bool-bcomb-cot} is secure against semi-honest adversaries, and computes an additive share of the inner product $\vec{w}^\top \vec{a}$ among $\A, \B$. 
\end{lemma}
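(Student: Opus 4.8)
The plan is to establish correctness by directly expanding the two COT executions and verifying that the sum $\share{z}_1 + \share{z}_2$ telescopes to $\vec{w}^\top\vec{a}$, using two facts: (i) the definition of $m\times\cot{\ell}$ from the excerpt, and (ii) the identity that for Boolean-shared $\vec{w}$, the product $\vec{w}_j$ can be recovered from the XOR of shares. Security will be handled separately and briefly, as the excerpt already flags it as ``trivial.''

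First I would unpack the COT outputs. By the definition of $m\times\cot{\ell}$, after the first execution $m\times\cot{\ell}(\vec{f^{\,1}}, \share{\vec{w}_j}_2)$ with $\A$ as Sender, $\A$ holds random $\vec{x}_j$ and $\B$ holds $\vec{y}_j = (\neg\share{\vec{w}_j}_2)\cdot\vec{x}_j + \share{\vec{w}_j}_2\cdot f_{1,j}(\vec{x}_j)$. Substituting the correlation function $f_{1,j}(x) = x - \share{\vec{w}_j}_1\cdot\share{\vec{a}_j}_1 + \neg\share{\vec{w}_j}_1\cdot\share{\vec{a}_j}_1$, this collapses: when $\share{\vec{w}_j}_2 = 0$ we get $\vec{y}_j = \vec{x}_j$, and when $\share{\vec{w}_j}_2 = 1$ we get $\vec{y}_j = \vec{x}_j - \share{\vec{w}_j}_1\cdot\share{\vec{a}_j}_1 + \neg\share{\vec{w}_j}_1\cdot\share{\vec{a}_j}_1$. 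The key algebraic step is then to observe that $\vec{y}_j - \vec{x}_j$ equals $\share{\vec{w}_j}_2$ times the bracketed correction term, and to recognize that $\neg\share{\vec{w}_j}_1 = 1 - \share{\vec{w}_j}_1$ so that the combination $-\share{\vec{w}_j}_1 + \neg\share{\vec{w}_j}_1 = 1 - 2\share{\vec{w}_j}_1$ encodes the sign flip needed to turn the XOR reconstruction $\vec{w}_j = \share{\vec{w}_j}_1 \oplus \share{\vec{w}_j}_2 = \share{\vec{w}_j}_1 + \share{\vec{w}_j}_2 - 2\share{\vec{w}_j}_1\share{\vec{w}_j}_2$ into an arithmetic product. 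The symmetric computation for the second COT gives $\vec{x'}_j, \vec{y'}_j$ with the roles of party $1$ and $2$ swapped.

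Next I would add the two output shares from lines $4$ and $5$. Summing $\share{z}_1 + \share{z}_2$, the fresh random masks $\vec{x}_j$ and $\vec{x'}_j$ cancel against the $\vec{y}_j, \vec{y'}_j$ cross-terms, leaving the ``diagonal'' contributions $\share{\vec{w}_j}_1\share{\vec{a}_j}_1 + \share{\vec{w}_j}_2\share{\vec{a}_j}_2$ plus the cross-share correction terms extracted above. The goal is to show this total equals $\sum_{j}(\share{\vec{w}_j}_1 + \share{\vec{w}_j}_2 - 2\share{\vec{w}_j}_1\share{\vec{w}_j}_2)(\share{\vec{a}_j}_1 + \share{\vec{a}_j}_2) = \sum_j \vec{w}_j \vec{a}_j$, using $\vec{a}_j = \share{\vec{a}_j}_1 + \share{\vec{a}_j}_2$ over $\Z_q$ and the XOR-to-arithmetic identity for $\vec{w}_j$. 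I expect the main obstacle to be bookkeeping: carefully tracking the four sign/negation combinations across the two COT calls and confirming that every $\neg$ expands to $1-(\cdot)$ consistently so that all mask terms cancel and the remaining terms assemble exactly into the bilinear expansion of $\vec{w}_j\vec{a}_j$. This is routine but error-prone, and is precisely the calculation deferred to Appendix~\ref{app:proofbool-bcomb-cot}.

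Finally, for security I would argue that each party's view consists solely of its local inputs together with the outputs of two invocations of the ideal $m\times\cot{\ell}$ functionality, whose Chooser- and Sender-side outputs are uniformly random (the Sender's $\vec{x}_j$ are random by definition, and the Chooser's $\vec{y}_j$ are masked by them). Hence a simulator can reproduce each view by sampling uniform values in $\{0,1\}^\ell$, and semi-honest security follows immediately from the security of the underlying COT functionality by a standard hybrid argument, with no additional interaction to simulate.
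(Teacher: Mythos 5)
Your proposal is correct, and the algebra it outlines does go through: writing $\neg b = 1-b$, the two $m\times\cot{\ell}$ executions contribute per-coordinate correction terms $(\vec{y}_j-\vec{x}_j) = \share{\vec{w}_j}_2(1-2\share{\vec{w}_j}_1)\share{\vec{a}_j}_1$ and $(\vec{y'}_j-\vec{x'}_j) = \share{\vec{w}_j}_1(1-2\share{\vec{w}_j}_2)\share{\vec{a}_j}_2$, which combine with the diagonal terms $\share{\vec{w}_j}_1\share{\vec{a}_j}_1 + \share{\vec{w}_j}_2\share{\vec{a}_j}_2$ of lines 4--5 into $(\share{\vec{w}_j}_1+\share{\vec{w}_j}_2-2\share{\vec{w}_j}_1\share{\vec{w}_j}_2)(\share{\vec{a}_j}_1+\share{\vec{a}_j}_2) = (\share{\vec{w}_j}_1\oplus\share{\vec{w}_j}_2)\,\vec{a}_j = \vec{w}_j\vec{a}_j$, exactly as you predict. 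However, your route is organized differently from the paper's proof in Appendix~\ref{app:proofbool-bcomb-cot}: there, correctness is established by an exhaustive case analysis over the four possible values of the share pair $(\share{\vec{w}_j}_1,\share{\vec{w}_j}_2)$, in each case substituting the concrete correlation functions into the COT outputs and checking that $\share{z_j}_1+\share{z_j}_2$ equals $0$ or $\vec{a}_j$ as appropriate. The two arguments verify the same identity; yours buys uniformity and insight --- a single computation covers all four cases and makes explicit why the construction works (the correlation function encodes the sign flip $1-2b$ that converts the XOR reconstruction of $\vec{w}_j$ into an arithmetic product), and it would generalize more readily, e.g.\ to the $k$-ary matrix decomposition the paper mentions --- while the paper's case analysis is more elementary, needing no algebraic identities beyond plugging in bits, at the cost of repetition. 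On security both treatments are terse: the paper disposes of it in one sentence (two COT calls plus local additions), and your simulator sketch --- Sender outputs are uniform by definition of the functionality, Chooser outputs are one-time-padded by unseen Sender randomness --- is if anything slightly more explicit than the paper's, and is adequate at the level of rigor the paper itself adopts.
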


Protocol~\ref{prot:ternary-mv} achieves our goal of computing an arithmetic share of $\mat{W}\vec{a} = \mat{W^+}\vec{a} - \mat{W^-}\vec{a}$, for an $n\times m$ matrix. This is easily achieved using $2n$ calls to Protocol~\ref{prot:bool-bcomb-cot}. This translates into
$4n$ executions of $m\times\cot{\ell}$, plus {\em local} extremely efficient additions. Note that this protocol is fully paralellizable, as all the COT executions can be run in parallel.

\begin{protocol}[t]
\DontPrintSemicolon
\caption{Ternary-Integer Matrix-Vector Product}\label{prot:ternary-mv}
        {\bf Parties:}~ $\A$ and $\B$\\
        \KwIn{Arithmetic shares of integer vector $\vec{a}\in\mathbb{Z}_q^m$ and Boolean shares of binary matrices $\mat{W^+}, \mat{W^-}\in\{0, 1\}^{n,m}$}
        \KwOut{Arithmetic shares of $\vec{z} = \mat{W^+}\vec{a} - \mat{W^-}\vec{a}$}
\BlankLine
\begin{minipage}{\hsize}
  \begin{algorithmic}[1]
    \STATE $\A$ and $\B$ compute $\share{\mat{W^+}\vec{a}}$ using $n$ executions of Protocol~\ref{prot:bool-bcomb-cot}.
    \STATE $\A$ and $\B$ compute $\share{\mat{W^-}\vec{a}}$ using $n$ executions Protocol~\ref{prot:bool-bcomb-cot}.
    \STATE $\partyi$ sets  $\share{\vec{z}}_i := \share{\mat{W^+}\vec{a}}_i - \share{\mat{W^+}\vec{a}}_i$.
\end{algorithmic}
\end{minipage}
\end{protocol}

Overall, our approach exploits the fact that $\mat{W}$ is ternary without having to perform any Boolean additions in secure computations. Our experiments in Section~\ref{sec:experiments} show concrete gains over the prior state-of-the-art.

\myparagraph{Communication costs.}
Using the $m\times\cot{\ell}$ Extension protocol from \cite{asharov2013more},
the parties running Protocol~\ref{prot:bool-bcomb-cot} send $m(\tau + \ell)$ bits
to each other to compute inner products of length $m$, where $\tau$ is the security parameter ($128$ in our implementation). This results in $nm(\tau + \ell)$ bits being sent/received by each party for the whole matrix-vector multiplication protocol. In contrast, the OT-based approaches to matrix-vector multiplication entirely based on arithmetic sharing from~\cite{mohassel2017secureml} would require at least $nm\ell(\tau + \ell)$ (assuming optimizations like packing and vectorization).

\myparagraph{Packing for matrix-vector multiplication.}
While the forward pass of quotient operates over $8$-bit vectors (and thus $q\!=\!8$ in Protocol~\ref{prot:ternary-mv}), the value of $\ell$ in implementations of $m \times \cot{\ell}$ is $128$, i.e., the AES block size. However, we can exploit this to pack $128/8\!=\!16$ vector multiplications against the same matrix for the same communication and computation. This is very useful in batched gradient descent, as this results in $16$x additional savings in communication and computation. This packing optimization was also used in~\cite{mohassel2017secureml} for implementing an OT-based offline phase to matrix-vector multiplications occurring in batched gradient descent.

\subsection{Secure Forward Pass}
\label{sec:fw-pass-prot}
Our protocol for the forward pass (Algorithm~\ref{alg:fp_fp}) is a sequential composition of Protocol~\ref{prot:ternary-mv}, and a garbled circuit protocol with three components: (i) Evaluation of $\relu$, (ii) normalization by a public data-independent value $\alpha_l$ (line $3$ in Algorithm~\ref{alg:fp_fp}), and (iii) quantization function $Q(\cdot)$ in eq.~\eqref{eq:Q} in Section~\ref{section:discrete-learning}. The protocol is depicted in Figure~\ref{fig:forward-pass}. Note that we only show a forward pass for a single layer, but the protocol trivially composes sequentially with itself, as input $\vec{a}^{\ell-1}$ and output $\vec{a}^{\ell}$ are both secret-shared additively. Security follows directly from the security of the subprotocols, as their outputs and inputs are always secret-shares. 
For scalability, we describe an efficient circuit implementation of (i)-(iii). In our proposed circuit $\A$ inputs its share $\share{\vec{z}}_1$, and a random value chosen in advance that will become its share of the output, denoted $\share{\vec{a}^\ell}_1$. In the garbled circuit, $\vec{z}$ is first reconstructed (this requires $|\vec{z}|$ parallel additions). For component (i), $\relu$, we exploit the fact that the entries in $\vec{z}$ are encoded in binary in the circuit using two's complement as ${z}_i = (b_{i,k} \cdots b_{i,1})$, where $b_{i,k} = 1 \Leftrightarrow {z}_i < 0$. Hence $\relu ({z}_i) = (\neg b_{i,k}) \cdot b_{i,k}$. Note that this is very efficient, as it only requires to evaluate $k|\vec{z}|$ NOT and AND gates. The next two steps, (ii) normalization and (iii) quantization are extremely cheap, as they can be implemented with logical and arithmetic shifts without requiring any secure gate evaluation. Finally, to construct $\B$'s output we need to perform a subtraction inside the circuit to compute $\share{\vec{a}^\ell}_2 =  \vec{a}^\ell - \share{\vec{a}^\ell}_1$. Altogether this means that our forward pass requires execution of Protocol~\ref{prot:ternary-mv} and a garbled circuit protocol to evaluate a vector addition, a vector subtraction, and a linear number of additional gates. Moreover, note that this garbled circuit evaluation can be parallelized across components of the vector $\vec{z}$.

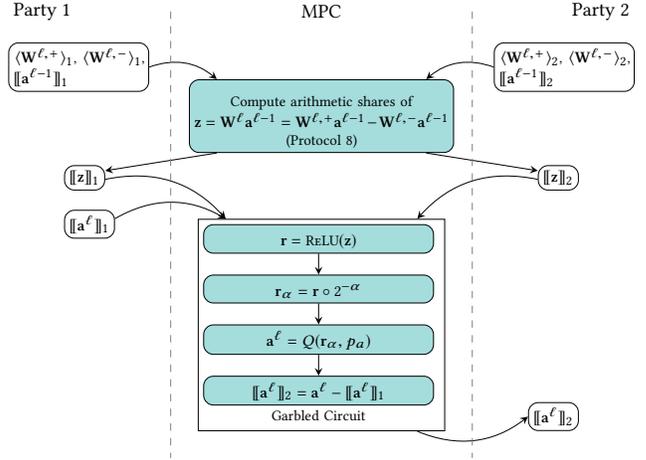
\begin{figure}\centering
  \resizebox{\columnwidth}{!}{
  \begin{tikzpicture}[scale=0.8]\scriptsize
    \node (p1) at (0, 0) {\small Party 1};
    \node (mpc) at (5, 0) {\small MPC};
    \node (p2) at (10, 0) {\small Party 2};

    \node[anchor={north west}, box p1, text width=0.22\columnwidth] (local1a)
    at ($(p1.south west)-(0,.3)$) {
     $\xorshare{\mat{W^{\ell,+}}}_{1}$, $\xorshare{\mat{W^{\ell,-}}}_{1}$, $\share{\vec{a}^{\ell -1}}_{1}$
    };
    \node[anchor={north east}, box p2, text width=0.22\columnwidth] (local1b)
    at ($(p2.south east)-(0,.3)$) {
     $\xorshare{\mat{W^{\ell,+}}}_{2}$, $\xorshare{\mat{W^{\ell,-}}}_{2}$, $\share{\vec{a}^{\ell-1}}_{2}$
    };

    \node[box mpc, text width = 0.43\columnwidth, anchor = north] (mpc1) at ($(mpc.south)+(0,-1)$) {
      \begin{center}
        Compute arithmetic shares of \\$\vec{z} = \mat{W^\ell}\vec{a}^{\ell-1} = \mat{W^{\ell, +}}\vec{a}^{\ell-1} - \mat{W^{\ell, -}}\vec{a}^{\ell-1}$\\(Protocol~\ref{prot:ternary-mv})
      \end{center}
    };

    \draw[arrow] ($(local1a.east)$) to [bend left] ($(mpc1.north west)+(.5, 0)$);
    \draw[arrow] ($(local1b.west)$) to [bend right] ($(mpc1.north east)+(-.5, 0)$);

    \node[anchor={north west}, box p1] (local2a)
    at ($(p1.south west)-(-1,2.5)$) {
     $\share{\vec{z}}_1$
    };
    \node[anchor={north west}, box p1] (local2aa)
    at ($(p1.south west)-(-1,3.3)$) {
     $\share{\vec{a}^\ell}_1$
    };
    \node[anchor={north east}, box p2] (local2b)
    at ($(p2.south east)-(1,2.5)$) {
     $\share{\vec{z}}_2$
    };

    \draw[arrow] ($(mpc1.south west)+(.5, 0)$) -- ($(local2a.north east)+(0, -.1)$);
    \draw[arrow] ($(mpc1.south east)+(-.5, 0)$) -- ($(local2b.north west)+(0, -.1)$) ;

    \node[draw, fill=white, solid, text width = 0.40\columnwidth, anchor = north] (mpc2) at ($(mpc.south)+(0,-3.5)$) {
    \begin{tikzpicture}
       \node[box mpc, solid, text width = 0.93\columnwidth] (mpc21){\vspace{-1ex}\begin{center}$\vec{r} = \relu(\vec{z})$\end{center}};
       \node[box mpc, solid, text width = 0.93\columnwidth] (mpc22) at ($(mpc21.south)+(0,-.3)$) {\vspace{-1ex}\begin{center}$\vec{r_\alpha} = \vec{r}\circ 2^{-\alpha}$\end{center}};
       \node[box mpc, solid, text width = 0.93\columnwidth] (mpc23) at ($(mpc22.south)+(0,-.3)$) {\vspace{-1ex}\begin{center}$\vec{a^\ell} = Q(\vec{r_\alpha},p_a)$\end{center}};
       \node[box mpc, solid, text width = 0.93\columnwidth] (mpc24) at ($(mpc23.south)+(0,-.3)$) {\vspace{-1ex}\begin{center}$\share{\vec{a^\ell}}_2 = \vec{a^\ell} - \share{\vec{a^\ell}}_1$\end{center}}; 

    \draw[arrow, solid] ($(mpc23.south)$) -- ($(mpc24.north)$) ; 
    \draw[arrow, solid] ($(mpc22.south)$) -- ($(mpc23.north)$) ;
    \draw[arrow, solid] ($(mpc21.south)$) -- ($(mpc22.north)$) ;

    \node[text width = 0.93\columnwidth] (text) at ($(mpc24.south)+(0,-.1)$) {\vspace{-3ex}\begin{center}Garbled Circuit\end{center}}; 
    \end{tikzpicture}
    
    };

    \draw[arrow] ($(local2a.south east)+(0, .2)$) to [bend left] ($(mpc2.north west)+(.5, 0)$);
    \draw[arrow] ($(local2aa.east)+(0, .1)$) to [bend left] ($(mpc2.north west)+(.5, 0)$);
    \draw[arrow] ($(local2b.south west)+(0, .2)$) to [bend right] ($(mpc2.north east)+(-.5, 0)$);

    \node[anchor={north east}, box p2] (local3b)
    at ($(p2.east)-(1,7)$) {
     $\share{\vec{a^\ell}}_2$
    };

    \draw[arrow] ($(mpc2.south east)+(-.5, 0)$) to [bend right] ($(local3b.west)$) ;

    \draw[dashed, gray] ($(p1)+(2.3, 0)$) -- ($(p1)+(2.3, -8)$);
    \draw[dashed, gray] ($(p2)-(2.3, 0)$) -- ($(p2)-(2.3, 8)$);
  \end{tikzpicture}
  }
  \caption{Our MPC protocol for private prediction (forward pass). We compose three protocols to evaluate one layer of the form $f(\mat{W}\vec{a})$, with ternary $\mat{W}$, and where $f = \relu$.} 
    \label{fig:forward-pass}
\end{figure}

\subsection{Secure Backward Pass}
\label{sec:bw-pass-prot}
Figure~\ref{fig:bw-pass} shows our protocols for the backward pass (Algorithm~\ref{alg:fp_bp}). 
Analogous to the forward pass, we depict the pass for one layer, and observe that all its inputs and outputs are secret-shared. Hence, it can be easily composed sequentially across layers, and with the forward pass, and its security follows trivially from the security of each of the subprotocols.

The protocol works via a sequence of subprotocols. 
Each of them produces a result shared among the parties $\A, \B$ either as a Boolean-share or an additive share. As in the forward pass, this protocol leverages Protocol~\ref{prot:ternary-mv}, as well as Protocol~\ref{prot:bool-bcomb-cot} for component-wise multiplication. Recall that the goal of the backward pass is to recompute ternary weight matrices $\{\mat{W}^\ell\}_{\ell =1}^L$ by means of a gradient-based procedure. 
As $\{\mat{W}^\ell\}_{\ell =1}^L$ is represented in our MPC protocol by pairs of binary matrices, the protocol to be run for each layer $\ell$ takes as inputs the Boolean-shares of such matrices, i.e., $\mat{W^{\ell, -}}$ and $\mat{W^{\ell, +}}$, from each party. 
Moreover, the parties contribute to the protocol arithmetic shares of the input to each layer $a^\ell$ computed in the forward pass, as well as the target values $\vec{y}$.

The first step of the backward pass is a data-dependent normalization of the activation gradient $\vec{e}^\ell$, followed by the quantization step that we described in the forward pass, as shown in Figure~\ref{fig:bw-pass}. 
We now describe the design of the Boolean circuit used to compute these steps.

\myparagraph{Normalization by the infinite norm.}
Our goal is to design a (small) circuit that, given $\vec{e}$, computes $\vec{\vec{e}}/2^{\texttt{npow}(\max\{|\vec{e}|\})}$. A naive circuit would compute the absolute value of every entry $|\vec{e}|$, compute the maximum value $\max\{|\vec{e}|\}$, compute $2^{\lceil \log_2(\max\{|\vec{e}|\}) \rceil}$, and finally compute a division. However, computing exponentiation and logarithm in a garbled circuit would be prohibitively expensive. Such circuits are large, and we have to do this computation 
in each layer $\ell$, as many times as the number of total iterations. 
To overcome this we apply two crucial optimizations: (i) approximate $\max\{|\vec{e}|\}$ by bitwise OR of all values $|\vec{e}_i|$, and (ii) compute $\texttt{npow}$ with an efficient folklore procedure for obtaining the number of leading zeros in a binary string. This requires only $b$ OR gates and  $\log(b)$ arithmetic right shits and additions, where $b$ is the bitwidth of the entries of $\vec{e}$ ($8$ in our applications) \cite{DBLP:books/aw/Warren2013}. Also note that the division can be computed as an arithmetic right shift. An important remark is that the denominator is a private value in the circuit, which means that, although we can use right shift for division, our circuit needs to first compute all possible right shifts and then select the result according to the value of $\texttt{npow}(\max\{|\vec{e}|\}))$. This involves a subcircuit linear in $b$, and since in our case $b\!=\!8$ we once again benefit from having small bitwidth, by trading a small overhead for costly divisions, logarithms, and exponentiations.

\myparagraph{Derivatives of $\relu$ and saturation.}
Computing derivatives of $\relu$ and saturation $S(\cdot)$ can be done efficiently in a Boolean representation, as they lie in $\{0,1\}$. Specifically, computation only involves extracting the sign bit for $\relu$, and ANDing a few bits for saturation. Ultimately we need to compute $\vec{e}\circ \vec{d}$, for $\vec{d} = \relu'(\vec{a}) \circ S'(\vec{a})$ (line 4 in Algorithm~\ref{alg:fp_bp}). Not that this is a Boolean combination of integers, so we can alternate between Boolean and arithmetic shares and compute it with Protocol~\ref{prot:bool-bcomb-cot} (for component-wise product). We can further optimize the procedure by computing $\xorshare{\vec{d}}_i$ in the forward pass, since $\relu$ is already used there. This is commonly done in ML implementations in the clear.

The remainder of the backward pass involves (i) computing $\vec{e}^{l-1}$ (the rest of line 4 in Algorithm~\ref{alg:fp_bp}), for which we use Protocol~\ref{prot:ternary-mv}, and (ii) an outer product between $8$-bit vectors ${\ab^{l-1}}^\top\eb$ (line 5). For (ii) we use a vectorized version of the OT-based multiplication protocol presented originally in~\cite{DBLP:conf/crypto/Gilboa99}, and used in~\cite{demmler_aby_2015,mohassel2017secureml}.

Overall, this makes our backward pass very efficient, involving three small garbled circuits (two can be parallelized), and relies heavily on oblivious transfer computations.

\myparagraph{SGD.}
To implement Algorithm~3 we need to additionally keep higher precision $8$-bit matrices $\{\mat{W}_{(\overline{p}_w)}^\ell\}_{\ell =1}^L$ as arithmetic shares. We ternarize these to obtain our weights $\Wb$, which we implement in the natural way with a small garbled circuit involving $2$ comparisons. The same operations for quantization and normalization of $\vec{e}$ can be used for $\mat{\vec{G}}$.

\myparagraph{AMSGrad.} 
Almost all of the operations in AMSGrad (Algorithm~\ref{alg:amsgrad}): quantization, normalization, saturation, and absolute value have been described as part of the previous protocols. The only addition is element-wise maximum (line 10), which we do  via a comparison of Boolean shares.

\myparagraph{Convolutional \& residual layers.} Although we have only described a fully connected layer, extending this protocol to convolutional layers is straightforward. The forward and backward passes of convolutional layers can be written using the same steps as Algorithms~\ref{alg:fp_fp} and \ref{alg:fp_bp} with weight-reshaping. And max-pooling operations are simply comparisons, efficiently implemented in Boolean shares. Similarly, for residual networks, we only introduce integer additions in the forward pass (which we can perform on additive shares) and another computation of $\vec{d}$ for the backward pass.

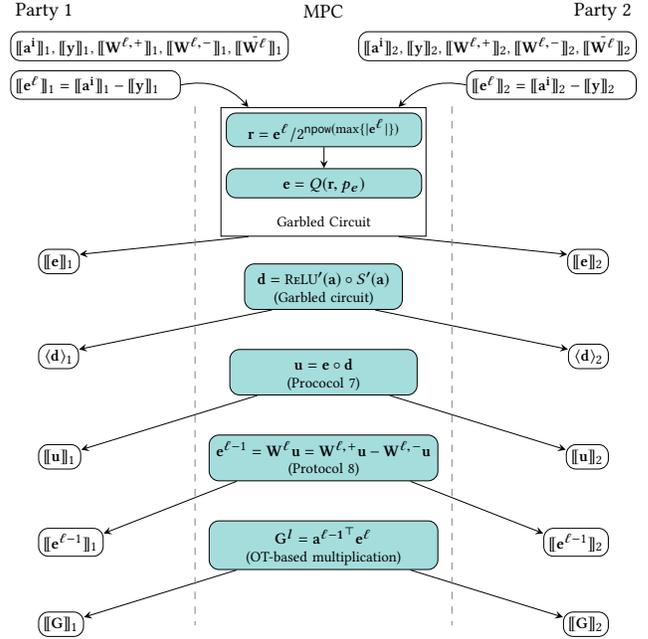
\begin{figure}\centering
  \resizebox{\columnwidth}{!}{
  \begin{tikzpicture}[scale=0.8]\scriptsize
    \node (p1) at (0, 0) {\small Party 1};
    \node (mpc) at (5, 0) {\small MPC};
    \node (p2) at (10, 0) {\small Party 2};

    \node[anchor={north west}, box p1] (local1a)
    at ($(p1.south west)-(0,.1)$) {
     $\share{\vec{a^i}}_{1}$, $\share{\vec{y}}_{1}$,
     $\share{\mat{W^{\ell, +}}}_{1}$, $\share{\mat{W^{\ell,-}}}_{1}$, $\share{\mat{\bar{W^\ell}}}_{1}$
    };
    \node[anchor={north east}, box p2] (local1b)
    at ($(p2.south east)-(0,.1)$) {
    $\share{\vec{a^i}}_{2}$, $\share{\vec{y}}_{2}$,
    $\share{\mat{W^{\ell, +}}}_{2}$, $\share{\mat{W^{\ell,-}}}_{2}$, $\share{\mat{\bar{W^\ell}}}_{2}$
    };

    \node[anchor={north west}, box p1, text width=0.27\columnwidth] (local1aa)
    at ($(p1.south west)-(0,.8)$) {
     $\share{\vec{e}^\ell}_1 =  \share{\vec{a^i}}_{1} - \share{\vec{y}}_{1}$
    };
    \node[anchor={north east}, box p2, text width=0.27\columnwidth] (local1bb)
    at ($(p2.south east)-(0,.8)$) {
     $\share{\vec{e}^\ell}_2 =  \share{\vec{a^i}}_{2} - \share{\vec{y}}_{2}$
    };

    \node[draw, fill=white, solid, text width = 0.33\columnwidth, anchor = north] (mpc1) at ($(mpc.south)+(0,-1.5)$) {
    \begin{tikzpicture}
       \node[box mpc, solid, text width = 0.95\columnwidth] (mpc21){\vspace{-1ex}\begin{center}$\vec{r} = \vec{\vec{e}^\ell}/2^{\texttt{npow}(\max\{|\vec{e}^\ell|\})}$\end{center}};
       \node[box mpc, solid, text width = 0.95\columnwidth] (mpc22) at ($(mpc21.south)+(0,-.3)$) {\vspace{-1ex}\begin{center}$\vec{e} = Q(\vec{r},p_e)$\end{center}};

    \draw[arrow, solid] ($(mpc21.south)$) -- ($(mpc22.north)$) ;

    \node[text width = 0.93\columnwidth] (text) at ($(mpc22.south)+(0,-.1)$) {\vspace{-1ex}\begin{center}Garbled Circuit\end{center}};
    \end{tikzpicture}
    
    };

    \draw[arrow] ($(local1aa.east)+(0,0)$) to [bend left] ($(mpc1.north west)+(.5, 0)$);
    \draw[arrow] ($(local1bb.west)+(0,0)$$) to [bend right] ($(mpc1.north east)+(-.5, 0)$);

    \node[anchor={north west}, box p1] (local2a)
    at ($(p1.south west)-(-.5,4)$) {
      $\share{\vec{e}}_1$
    };
    \node[anchor={north east}, box p2] (local2b)
    at ($(p2.south east)-(.5,4)$) {
     $\share{\vec{e}}_2$
    };

    \draw[arrow] ($(mpc1.south west)+(.5, 0)$) -- ($(local2a.north east)+(0, -.1)$);
    \draw[arrow] ($(mpc1.south east)+(-.5, 0)$) -- ($(local2b.north west)+(0, -.1)$) ;

    \node[box mpc, text width = 0.25\columnwidth, anchor = north] (mpc3) at ($(mpc.south)+(0,-4.3)$) {
     \vspace{-1ex}
      \begin{center} $\vec{d} = \relu'(\vec{a}) \circ S'(\vec{a})$\\(Garbled circuit)\end{center}
    };

    \node[anchor={north west}, box p1] (local3a)
    at ($(p1.south west)-(-.5,5.7)$) {
      $\xorshare{\vec{d}}_1$
    };
    \node[anchor={north east}, box p2] (local3b)
    at ($(p2.south east)-(.5,5.7)$) {
      $\xorshare{\vec{d}}_2$
    };

    \draw[arrow] ($(mpc3.south west)+(.5, 0)$) -- ($(local3a.north east)+(0, -.1)$);
    \draw[arrow] ($(mpc3.south east)+(-.5, 0)$) -- ($(local3b.north west)+(0, -.1)$) ;

    \node[box mpc, text width = 0.3\columnwidth, anchor = north] (mpc4) at ($(mpc3.south)+(0,-.7)$) {
     \vspace{-1ex}
      \begin{center}$\vec{u} = \vec{e}\circ \vec{d}$\\(Prococol~\ref{prot:bool-bcomb-cot})\end{center}
    };

    \node[anchor={north west}, box p1] (local4a)
    at ($(p1.south west)-(-.5,7.5)$) {
      $\share{\vec{u}}_1$
    };
    \node[anchor={north east}, box p2] (local4b)
    at ($(p2.south east)-(.5,7.5)$) {
     $\share{\vec{u}}_2$
    };

    \draw[arrow] ($(mpc4.south west)+(.5, 0)$) -- ($(local4a.north east)+(0, -.1)$);
    \draw[arrow] ($(mpc4.south east)+(-.5, 0)$) -- ($(local4b.north west)+(0, -.1)$) ;
    
   \node[box mpc, text width = 0.37\columnwidth, anchor = north] (mpc5) at ($(mpc4.south)+(0,-.7)$) {
     \vspace{-1ex}
      \begin{center}$\vec{e}^{\ell-1} = \mat{W^\ell}\vec{u} = \mat{W^{\ell, +}}\vec{u} - \mat{W^{\ell, -}}\vec{u}$\\(Protocol~\ref{prot:ternary-mv})\end{center}
    };

    \node[box mpc, text width = 0.37\columnwidth, anchor = north] (mpc6) at ($(mpc5.south)+(0,-.7)$) {
     \vspace{-1ex}
      \begin{center}$\mat{G}^{l} = \vec{a^{\ell-1}}^\top\vec{e}^\ell$\\(OT-based multiplication)\end{center}
    };

   \node[anchor={north west}, box p1] (local5a)
    at ($(p1.south west)-(-.5,9)$) {
      $\share{\vec{e}^{\ell - 1}}_1$
    };
    \node[anchor={north east}, box p2] (local5b)
    at ($(p2.south east)-(.5,9)$) {
     $\share{\vec{e}^{\ell - 1}}_2$
    };

   \node[anchor={north west}, box p1] (local6a)
    at ($(p1.south west)-(-.5,10.5)$) {
      $\share{\mat{G}}_1$
    };
    \node[anchor={north east}, box p2] (local6b)
    at ($(p2.south east)-(.5,10.5)$) {
     $\share{\mat{G}}_2$
    };

    \draw[arrow] ($(mpc5.south west)+(.5, 0)$) -- ($(local5a.north east)+(0, -.1)$);
    \draw[arrow] ($(mpc5.south east)+(-.5, 0)$) -- ($(local5b.north west)+(0, -.1)$) ;
    \draw[arrow] ($(mpc6.south west)+(.5, 0)$) -- ($(local6a.north east)+(0, -.1)$);
    \draw[arrow] ($(mpc6.south east)+(-.5, 0)$) -- ($(local6b.north west)+(0, -.1)$) ;

     \draw[dashed, gray] ($(p1)+(2.7, -1.7)$) -- ($(p1)+(2.7, -11)$);
     \draw[dashed, gray] ($(p2)-(2.7, 1.7)$) -- ($(p2)-(2.7, 11)$);

  \end{tikzpicture}
  }
  \caption{Our protocol for the backward pass, corresponding to Algorithms~\ref{alg:fp_bp} from Section~\ref{section:ml}.
    \label{fig:bw-pass}
  }
\end{figure}

\section{Experiments}
\label{sec:experiments}
In this section, we present our experimental results for secure DNN training and prediction.

\myparagraph{Experimental Settings.} 
 The experiments were executed over two Microsoft Azure Ds32 v3 machines equipped with 128GB RAM and Intel Xeon E5-2673 v4 2.3GHz processor, running Ubuntu 16.04. In LAN experiments, machines were hosted in the same region (West Europe) with an average latency of 0.3ms and a bandwidth of 1.82GB/s.
 For WAN, the machines were hosted in two different regions (North Europe \& East US), with an average latency of 42ms and a bandwidth of 24.3 MB/sec. The machine specifications were chosen to be comparable with the ones in~\cite{mohassel2017secureml}, hence enabling direct running time comparisons.

\myparagraph{Implementation.} 
We use two distinct code bases. We use the EMP-toolkit~\cite{emp-toolkit} to implement our secure protocols for forward and backward passes, as described in Section~\ref{sec:protocols}. EMP is written in C++ and offers efficient implementations of OT and COT extension~\cite{asharov2013more}. We extended the semi-honest COT implementation to the functionality required for Protocol~\ref{prot:bool-bcomb-cot}, as it is currently limited to correlation functions of the form $f(x) = x \oplus\Delta$ (the ones required by Yao's garbled circuits protocol). This code base was used for timing results. 
We developed a more versatile insecure Python implementation based on Tensorflow\cite{abadi2016tensorflow} for accuracy experiments. While this implementation does not use MPC, it mirrors the functionality implemented using the EMP-toolkit. 

\myparagraph{Evaluations.} 
For training over QUOTIENT, we use two weight variables as described in  Section \ref{section:ml}: (i) ternary (2-bit) weights for the forward and backward passes, and (ii) 8-bit weights for the SGD and AMSgrad algorithms. We use 8-bits for the quantized weight gradients ($g$), activations ($a$) and activation gradients ($e$).

As our protocols are online, to 
compare with other approaches employing an offline phase, we take a conservative approach: we compare their offline computation time with our total computation time using similar computational resources. We adopt this strategy because online phases for these approaches are relatively inexpensive and could potentially involve a set-up overhead. Additionally, our model could easily be divided into offline/online phases, but we omit this for simplicity.

We employ a naive parallelization strategy: running independent processes over a mini-batch on different cores. This speeds up the computation on average by 8-15x over LAN and by about 10-100x over WAN depending on the number of parallelizable processes. We leave more involved parallelization strategies to future work.

\subsection{Data-independent benchmarking}
In this section, we present the running times of the basic building blocks that will be used for DNN prediction and training.
\begin{table}[]
\resizebox{\columnwidth}{!}{
\begin{tabular}{@{}lccccc@{}}
\toprule
Network              & $k$      & QUOTIENT (s)  & GC (s) & SecureML (OT) (s)& SecureML (LHE) (s) \\ \midrule
\multirow{5}{*}{LAN} & $10^{3}$ & 0.08                                                                        & 0.025  & 0.028  & 5.3     \\
                     & $10^{4}$ & 0.08                                                                        & 0.14   & 0.16   & 53      \\
                     & $10^{5}$ & 0.13                                                                        & 1.41   & 1.4    & 512     \\
                     & $10^{6}$ & 0.60                                                                        & 13.12  & 14*    & 5000*   \\
                     & $10^{7}$ & 6.0                                                                       & 139.80 & 140*   & 50000*  \\ \cmidrule(r){1-6}
\multirow{5}{*}{WAN} & $10^{3}$ & 1.7                                                                         & 1.9    & 1.4    & 6.2     \\
                     & $10^{4}$ & 1.7                                                                         & 3.7    & 12.5   & 62      \\
                     & $10^{5}$ & 2.6                                                                         & 20     & 140    & 641     \\
                     & $10^{6}$ & 7.3                                                                        & 148    & 1400*  & 6400*   \\
                     & $10^{7}$ & 44                                                                          & 1527   & 14000* & 64000*  \\ \cmidrule(l){1-6} 
 \end{tabular}
}
\caption{Comparison of our COT-based component-wise multiplication of $k$-dimensional vectors with ternary fixed-point multiplication using garbled circuits (GC) and SecureML~\cite{mohassel2017secureml} (OT, LHE). One of the vectors hold only ternary values.}
\label{tab:cp}
\end{table}

\begin{table}[t]
\resizebox{\columnwidth}{!}{
\begin{tabular}{@{}lcccc@{}}
\toprule
Network              & n    & QUOTIENT (s) & SecureML (OT Vec) (s) & \multicolumn{1}{l}{SecureML (LHE  Vec) (s)} \\ \midrule
\multirow{3}{*}{LAN} & 100  & 0.08         & 0.05       & 1.6                              \\
                     & 500  & 0.1          & 0.28       & 5.5                              \\
                     & 1000 & 0.14         & 0.46       & 10                               \\ \cmidrule(r){1-5}
\multirow{3}{*}{WAN} & 100  & 1.7          & 3.7        & 2                                \\
                     & 500  & 2            & 19         & 6.2                              \\
                     & 1000 & 2.7          & 34         & 11                               \\ \cmidrule(l){1-5} 
\end{tabular}
}
\caption{Performance comparison of our matrix-vector multiplication approach with the vectorized approaches of SecureML~\cite{mohassel2017secureml} (OT, LHE). Here we multiply a $128\times n$ ternary matrix with an $n$-dimensional vector.}
\label{tab:matprod}
\end{table}
\myparagraph{Component-wise Multiplication.}
Table~\ref{tab:cp} compares the running times of our COT-based approach from Protocol~\ref{prot:bool-bcomb-cot} for computing component-wise product with (i) an implementation of Algorithm~\ref{alg:ternary-mv} in a garbled circuit and (ii) two protocols proposed in SecureML~\cite{mohassel2017secureml} for the offline phase. Their first protocol corresponds to Gilboa's method for oblivious product evaluation (the OT-based variant implemented with a packing optimization). Their second is a sequence of Paillier encryptions, decryptions, and homomorphic additions (the LHE variant). 
For QUOTIENT and GC, one vector is ternary and the other holds $128$-bit values, while for OT and LHE both vectors hold $32$-bit values.
Although, our protocols for multiplication are independent and suitable for parallelization, here we benchmark without parallelization.
QUOTIENT clearly outperforms all other approaches as soon as we move past the set-up overhead of the base OTs. Note that most DNN layers involve greater than $10^4$ multiplications, which makes our approach more suitable for those applications.

\myparagraph{Matrix-Vector Product.}
As discussed in Section~\ref{sec:protocols}, our component-wise multiplication directly translates into matrix-vector products with local additions as described in Protocol~\ref{prot:ternary-mv}.
However, the protocols from~\cite{mohassel2017secureml} benefit greatly from a vectorization optimization, and thus a comparison of the matrix-vector multiplication tasks is important.
Table ~\ref{tab:matprod} compares the performance of implementation of Protocol~\ref{prot:ternary-mv}, for a ternary $128\times n$ matrix and an $n$-dimensional vector. Similar to Table \ref{tab:cp}, we populate the matrix with ternary values (2-bit values) and the vector with up to 128-bit values.
Our approach is at least 5x faster than the vectorized LHE protocol from \cite{mohassel2017secureml} on LAN, and is roughly 10x faster than the OT protocol from \cite{mohassel2017secureml} on WAN for $n\!\geq\!500$. In general, the speedup increases as we increase the number of computations.

\myparagraph{Layer Evaluation.}
Furthermore, we benchmark the basic building blocks of secure DNN training framework---forward and backward pass for a variety of different layer sizes. Figure~\ref{fig:fpstack} shows the running time of QUOTIENT for the forward pass as we increase the layer size. We split the total time into time spent on the matrix-vector product (Protocol \ref{prot:bool-bcomb}) and computation of activation function ($\relu$) using garbled circuits. Figure~\ref{fig:bpstack} shows the running time of the forward and backward pass, over a single batch of size $128$. We report the running time of each of the three required functionalities: quantization \& scaling, gradient computation, and error computation. As we increase the size of the layers, the garbled circuit for the quantization phase starts to dominate over the COT based matrix-matrix product required for the gradient computation. This can primarily be attributed to the quantization and scaling of the gradient matrix. Our COT-based matrix-vector multiplication shifts the bottleneck from the multiplication to the garbled circuits based scaling phase. Finding efficient protocols for that task is an interesting task for future work. In particular, parallelized garbled circuits and optimization of the matrix-matrix multiplication in the gradient computation phase could be explored.

\begin{figure}[t]
	\centering
	\begin{minipage}[t]{4cm}
		\centering
		\includegraphics[scale=0.3]{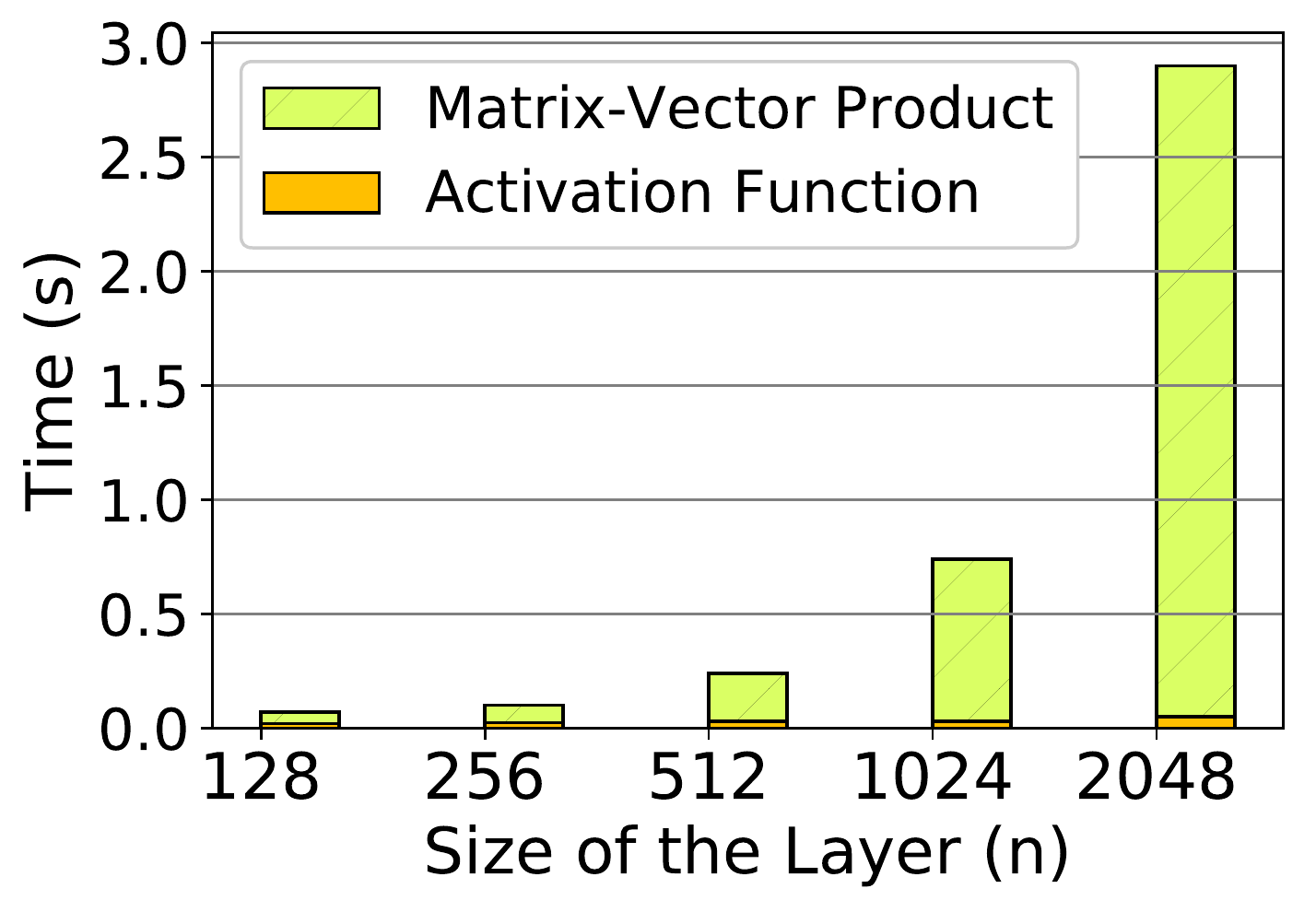}
		\caption{Forward pass time for single prediction over an $n \times n$ fully connected layer.}
		\label{fig:fpstack}
	\end{minipage}\hspace{0.2cm}
	\begin{minipage}[t]{4cm}
		\centering
		\includegraphics[scale=0.3]{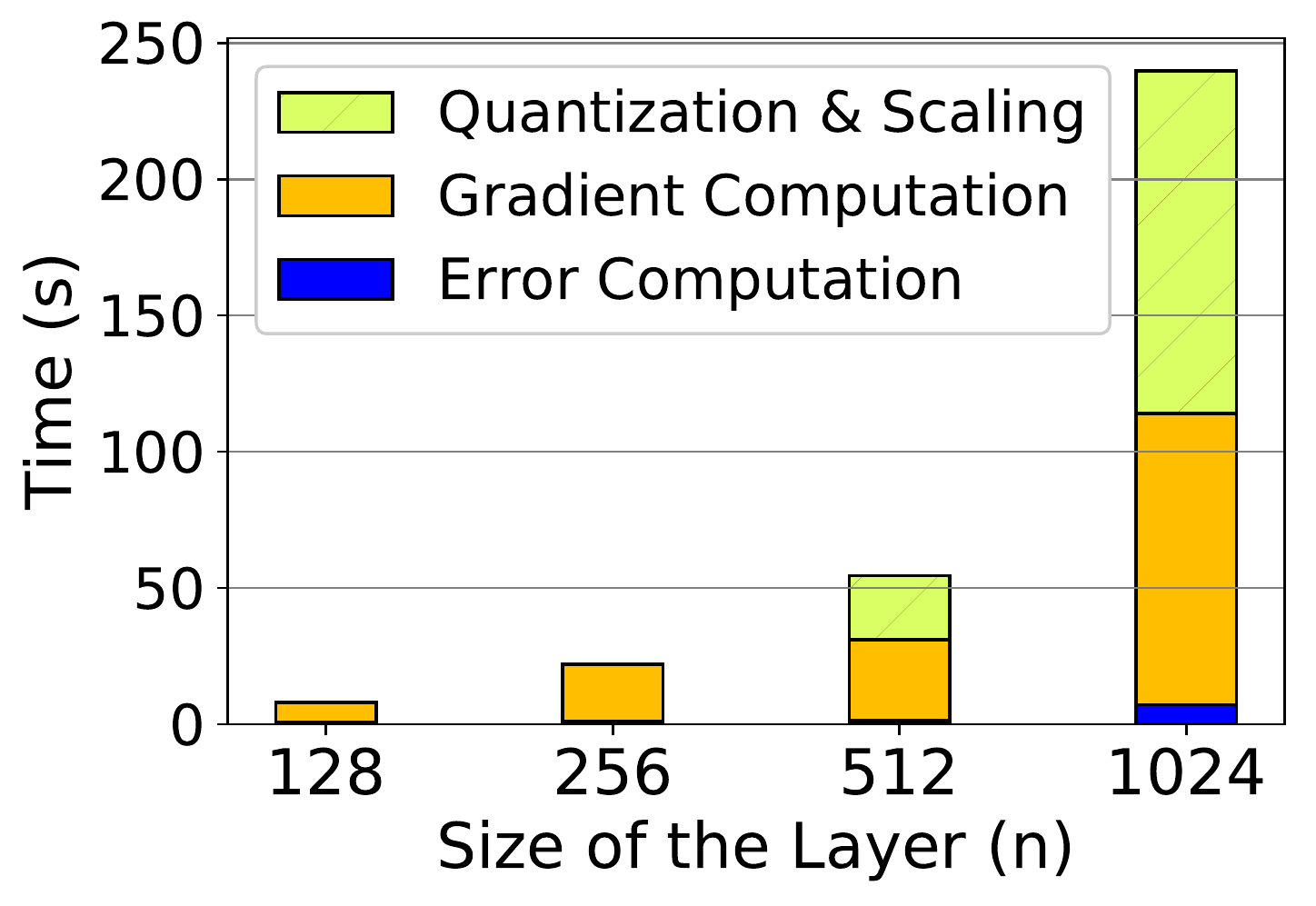}
		\caption{Forward and Backward pass time over an $n \times n$ fully-connected layer for 1 batch. Here batch size = 128.}
		\label{fig:bpstack}
	\end{minipage}
\end{figure}

\subsection{Experiments on Real-World Data}
In this section, we evaluate our proposed QUOTIENT 
on real-world datasets. 
We show that:  (i) QUOTIENT incurs only a small accuracy loss with respect to training over floating point, and (ii) it is more accurate than the state-of-the-art in fixed-point NN training, namely WAGE~\cite{wu2018training}. Both (i) and (ii) hold for several state-of-the-art architectures including fully connected, convolutional, and residual layers, and for different types of data (for residual layer results see the Appendix \ref{app:Residual}). For our 2PC protocols, we show that (iii) 2PC-QUOTIENT outperforms the existing secure training approach for DNNs SecureML \cite{mohassel2017secureml}, both in terms of accuracy and running time, and both in the LAN and WAN settings. We first report the accuracy across a variety of datasets to show (i) and (ii) above. Then we report running times for 2PC-QUOTIENT training and prediction to argue (iii).

\subsubsection{Datasets and Deep Neural Network Architectures.}
\label{sec:datasets}

We evaluate QUOTIENT on six different datasets. Five of these are privacy sensitive in nature, as they include health datasets---Thyroid~\cite{quinlan1987simplifying}, Breast cancer~\cite{cruz2014automatic}, MotionSense~\cite{malekzadeh2018mobile}, Skin cancer MNIST~\cite{DVN/DBW86T_2018} 
and a financial credit dataset---German credit~\cite{Dua2017}. We also evaluate our approach on MNIST~\cite{lecun1998gradient} for the purpose of benchmarking against prior work. 

\myparagraph{\textbf{MNIST}} contains 60K training and 10K test grayscale (28 $\times$ 28) images of 10 different handwritten digits. We adopt the state of the art floating point convolutional neural network LeNet~\cite{lecun1998gradient} (32C5-BN-32C5-BN-MP2-64C5-BN-64C5-BN-MP2-512FC-BN-DO-10SM)\footnote{BN, DO, and SM are batch-normalization, DropOut, and SoftMax respectively.} into a fixed-point equivalent of the form 32C5-MP2-64C5-MP2-512FC-10MSE for secure training \& inference. In addition, we explore a variety of fully-connected neural networks 2$\times$(128FC)-10MSE, 3$\times$(128FC)-10MSE, 2$\times$(512FC)-10MSE \& 3$\times$(512FC)-10MSE. We set the learning to $1$ for both SGD and AMSgrad optimizers.

\myparagraph{\textbf{MotionSense}} contains smartphone accelerometer and gyroscope sensor data for four distinct activities namely walking, jogging, upstairs, and downstairs. For each subject, the dataset contains about 30 minutes of continuously recorded data. We use a rolling window, of size 2.56 seconds each for extracting around 50K samples for training and 11K for testing. As proposed in ~\cite{malekzadeh2018mobile} we use floating point convolutional neural network 64C3-BN-MP2-DO-64C3-BN-MP2-DO-32C3-BN-MP2-DO-32C3-BN-MP2-DO-256FC-BN-DO-64FC-BN-DO-4SM and it's fixed-point analogue of the form 64C3-MP2-64C3-MP2-32C3-MP2-32C3-MP2-256FC-64FC-4MSE. We furthermore explore a fully connected architecture of the form 3$\times$(512FC)-4MSE.

\myparagraph{\textbf{Thyroid}} contains 3.7K training sample and 3.4K test samples of 21 dimensional patient data. The patients are grouped into three classes namely normal, hyperfunction and subnormal based on their thyroid functioning. We use a fully-connected neural network of the form 2$\times$(100FC)-3SM for this dataset and its analogue fixed-point network 2$\times$(100FC)-3MSE.

\myparagraph{\textbf{Breast cancer}} contains 5547 breast cancer histopathology RGB $(50 \times 50)$ images segregated into two classes---invasive ductal carcinoma and non-invasive ductal carcinoma. We use the 90:10 split for training and testing. We use a convolutional neural network with 3$\times$(36C3-MP2)-576FC-2SM and a fully-connected network of the form 3$\times$(512FC) along with their fixed point analogues.

\myparagraph{\textbf{Skin Cancer MNIST}} contains 8K training and 2K $(28 \times 28)$ dermatoscopic RGB images. They have been grouped into seven skin lesion categories. We use floating point network ResNet-20~\cite{he2016deep}. ResNet-20 is made up of batch-normalisation, dropout, SoftMax layers and employs cross-entropy loss for training in addition to the residual layers. For the fixed-point version of ResNet-20~\cite{he2016deep}, we exclude batch normalisation, average pooling, and SoftMax layers. In addition, we use a fully connected architecture of the form 2$\times$(512FC)-7MSE for secure training.

\myparagraph{\textbf{German credit}} contains 1k instances of bank account holders. they have been divided into two credit classes---Good or Bad. Each individual has 20 attributes (7 quantitative and 13 categorical). As a pre-processing step, we normalize the quantitative variables and encode the categorical variables using one-hot encoding. This amounts to 60 distinct feature for each individual in the dataset. We use the 80:20 split for training and testing. We use a fully-connected neural network of the form 2$\times$(124FC)-2SM and its fixed point analogue for this dataset.

\subsubsection{Accuracy.}
\label{sec:accuracy}

\begin{figure*}[t]
\centering
\includegraphics[width=\linewidth]{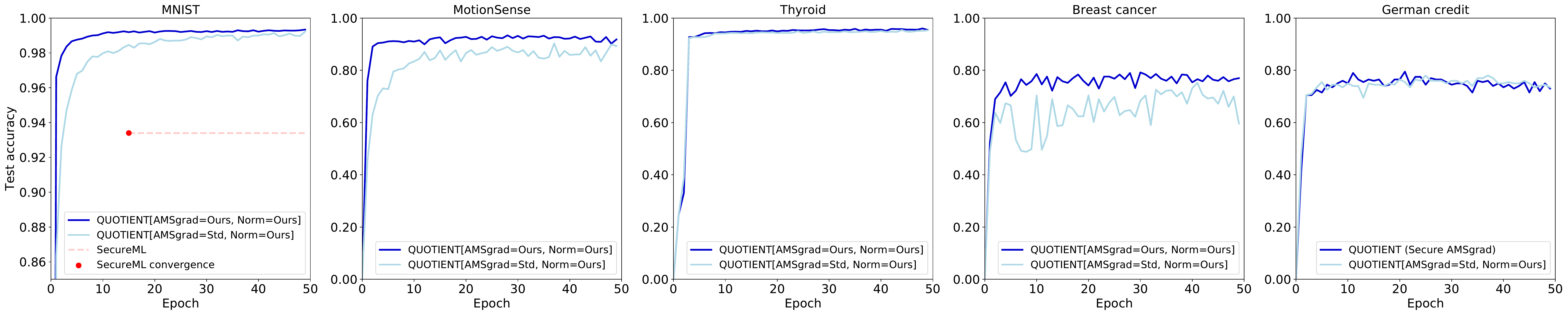}
\caption{Performance comparison of secure AMSgrad and secure SGD for QUOTIENT. The plots compare training curves over MNIST (using CNN), MotionSense, Thyroid, Breast cancer and German credit datasets.}
\label{fig:AccuracyAll}
\end{figure*}

\begin{figure*}[t]
\centering
\includegraphics[width=\linewidth]{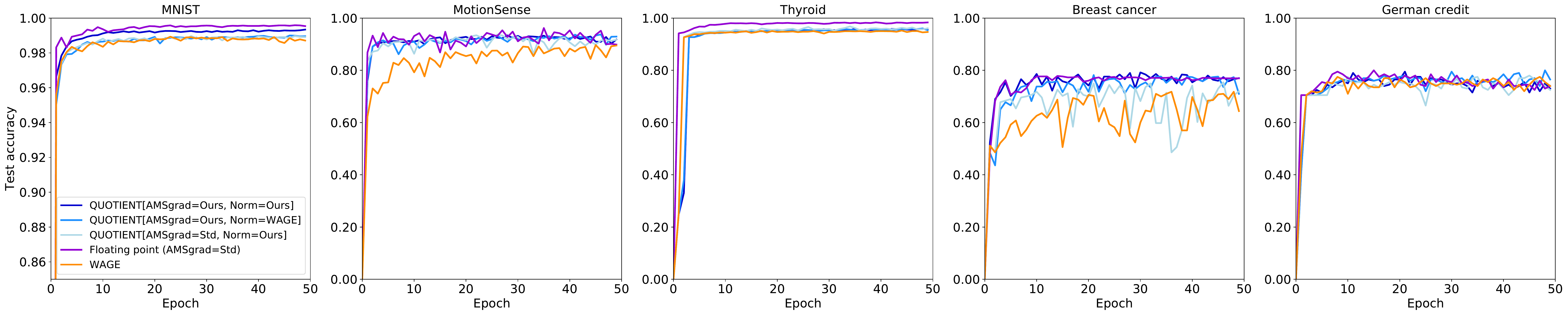}
\caption{Performance comparison of three different variants of QUOTIENT training with floating point and WAGE~\cite{wu2018training} training on MNIST, MotionSense, Thyroid, Breast cancer and German credit datasets.  {\QOW} and {\QSO}) differ from ({\QOO} in using next power of 2 vs the Closet Power of 2 and using standard AMSgrad by changing lines 9 and 11 of secure AMSgrad in Algorithm~\ref{alg:amsgrad}, respectively.}
\label{fig:MPCTricks}
\end{figure*}
\begin{table}[]

\resizebox{\columnwidth}{!}{
\begin{tabular}{@{}lcc@{}}
\toprule
Dataset       & \multicolumn{1}{l}{QUOTIENT  ($\%$)} & Floating point ($\%$) \\ \midrule
MNIST         & 99.38                                & 99.48                 \\
MotionSense   & 93.48                                & 95.65                 \\
Thyroid       & 97.03                                & 98.30                 \\
Breast cancer & 79.21                                & 80.00                 \\
German credit & 79.50                                & 80.50                 \\ \bottomrule
\end{tabular}
}
\caption{Accuracy comparison of training over state of the art floating point neural networks their fixed point equivalents using QUOTIENT.}
\label{tab:fpfp}
\end{table}
We evaluate the accuracy of QUOTIENT on different datasets and architectures. Also, we judge  
the impact of our MPC-friendly modifications
on accuracy. 
To do so we consider four variants of QUOTIENT: 
(i) Our proposed QUOTIENT, with secure AMSgrad optimizer (\QOO) (ii) QUOTIENT with the standard AMSgrad optimizer (\QSO) (described in Appendix~\ref{ref:appAMSgradStandard}) (iii) QUOTIENT with our proposed AMSgrad with the closest power of $2$ (C-Pow2) functionality instead of next power of $2$ for quantization (\QOW) (iv) QUOTIENT with the standard SGD optimizer (\QSSO).

\myparagraph{Comparison with Floating Point Training.} As a baseline evaluation of our training using \QOO, we compare its performance (upon convergence) with the floating point training counterparts in Table ~\ref{tab:fpfp}. For MNIST and Breast cancer datasets {\QOO} training achieves near state of the art accuracy levels for MNIST and Breast Cancer, while we differ by at most $\sim$2$\%$ for German credit, MotionSense and Thyroid datasets.

\myparagraph{Secure AMSgrad vs SGD.} Figure~\ref{fig:AccuracyAll} shows the training curves for QUOTIENT over all the datasets. In particular, we compare {\QOO} and {\QSSO}. The secure AMSgrad optimizer converges faster than the secure SGD optimizer especially for convolutional neural networks (used for MNIST, MotionSense, and Breast cancer). 

\myparagraph{Effects of our Optimizations.}
In order to evaluate the impact of our optimizations, we compare training with floating point and WAGE~\cite{wu2018training} with the first three variations of QUOTIENT
 in Figure~\ref{fig:MPCTricks}. For comparison with WAGE, we use the same fixed-point architecture as the ones used with QUOTIENT with an exception of the choice of optimizer--- we employ our proposed secure AMSgrad, while WAGE uses a SGD optimizer in conjunction with randomized rounding. Moreover, the floating point analogues employ (a) batch normalization and dropout in each layer, (b) softmax for the last layer, and (c) cross-entropy loss instead of MSE. We observe that {\QOO} outperforms all the other variants and is very close to the floating point training accuracy levels. We believe this is due to our modified AMSgrad, our normalization scheme, and use of the next power of two (used in {\QOO}) which may act as an additional regularization compared with the closest power of two.
\begin{table*}[t!]
\centering
\begin{adjustbox}{max width=\textwidth}
\begin{tabular}{l N N N N N N N N N N N N N N N N }
\hline
\multicolumn{17}{c }{LAN}\\
\hline
                                &\multicolumn{8}{c }{MNIST}& \multicolumn{2}{c }{MotionSense}& \multicolumn{2}{c }{Thyroid} & \multicolumn{2}{c }{Breast cancer}  & \multicolumn{2}{c}{German credit}\\ 
\cmidrule(lr){2-9}  
\cmidrule(lr){10-11}  
\cmidrule(lr){12-13}  
\cmidrule(lr){14-15} 
\cmidrule(lr){16-17} 

                                & \multicolumn{2}{c }{2 $\times$ (128FC)}  & \multicolumn{2}{c }{3 $\times$ (128FC)}     & \multicolumn{2}{c }{2 $\times$ (512FC)}     & \multicolumn{2}{c }{3 $\times$ (512FC)}    & \multicolumn{2}{c }{2 $\times$ (512FC)}   & \multicolumn{2}{c }{2 $\times$ (100FC)} & \multicolumn{2}{c }{3 $\times$ (512FC)}& \multicolumn{2}{c }{2 $\times$ (124FC)}    \\ 
\cmidrule(lr){2-3} 
\cmidrule(lr){4-5} 
\cmidrule(lr){6-7} 
\cmidrule(lr){8-9} 
\cmidrule(lr){10-11} 
\cmidrule(lr){12-13} 
\cmidrule(lr){14-15} 
\cmidrule(lr){16-17} 

 \multicolumn{1}{l}{Epoch}    & \multicolumn{1}{l }{Time} & \multicolumn{1}{c }{Acc} & \multicolumn{1}{c }{Time} & \multicolumn{1}{c }{Acc}  & \multicolumn{1}{c }{Time} & \multicolumn{1}{c }{Acc}& \multicolumn{1}{c }{Time} & \multicolumn{1}{c }{Acc}  & \multicolumn{1}{c }{Time} & \multicolumn{1}{c }{Acc} & \multicolumn{1}{c }{Time} & \multicolumn{1}{c }{Acc}& \multicolumn{1}{c }{Time}& \multicolumn{1}{c }{Acc}& \multicolumn{1}{c }{Time}& \multicolumn{1}{c }{Acc}\\ 

\cmidrule(lr){1-1}  
\cmidrule(lr){2-2} 
\cmidrule(lr){3-3} 
\cmidrule(lr){4-4} 
\cmidrule(lr){5-5} 
\cmidrule(lr){6-6}
\cmidrule(lr){7-7} 
\cmidrule(lr){8-8} 
\cmidrule(lr){9-9} 
\cmidrule(lr){10-10}
\cmidrule(lr){11-11} 
\cmidrule(lr){12-12} 
\cmidrule(lr){13-13} 
\cmidrule(lr){14-14} 
\cmidrule(lr){15-15} 
\cmidrule(lr){16-16} 
\cmidrule(lr){17-17}

 \multicolumn{1}{l}{1}    & \multicolumn{1}{l }{8.72h} & \multicolumn{1}{c }{0.8706} & \multicolumn{1}{c }{10.05h} & \multicolumn{1}{c }{0.9023}  & \multicolumn{1}{c }{27.13h} & \multicolumn{1}{c }{0.9341}& \multicolumn{1}{c }{38.76h} & \multicolumn{1}{c }{0.9424}  & \multicolumn{1}{c }{10.07h} & \multicolumn{1}{c }{0.8048}& \multicolumn{1}{c }{0.08h} & \multicolumn{1}{c }{0.2480} & \multicolumn{1}{c }{14.51h} & \multicolumn{1}{c }{0.4940} &  \multicolumn{1}{c }{0.03h} & \multicolumn{1}{c }{0.4100}\\ 
 
 \multicolumn{1}{l}{5}    & \multicolumn{1}{l }{43.60h} & \multicolumn{1}{c }{0.9438} & \multicolumn{1}{c }{50.25h} & \multicolumn{1}{c }{0.9536}  & \multicolumn{1}{c }{135.65h} & \multicolumn{1}{c }{0.9715}& \multicolumn{1}{c }{193.80h} & \multicolumn{1}{c }{0.9745}  & \multicolumn{1}{c }{50.35h} & \multicolumn{1}{c }{0.8847}& \multicolumn{1}{c }{0.40h} & \multicolumn{1}{c }{0.9341} & \multicolumn{1}{c }{72.55h} & \multicolumn{1}{c }{0.7360} &  \multicolumn{1}{c }{0.15h} & \multicolumn{1}{c }{0.725}\\ 
 
 \multicolumn{1}{l}{10}    & \multicolumn{1}{l }{87.20h} & \multicolumn{1}{c }{0.9504} & \multicolumn{1}{c }{100.50h} & \multicolumn{1}{c }{0.9604}  & \multicolumn{1}{c }{271.30h} & \multicolumn{1}{c }{0.9772}& \multicolumn{1}{c }{387.60h} & \multicolumn{1}{c }{0.9812}  & \multicolumn{1}{c }{100.70h} & \multicolumn{1}{c }{0.8855}& \multicolumn{1}{c }{0.80h} & \multicolumn{1}{c }{0.9453}& \multicolumn{1}{c }{145.10h} & \multicolumn{1}{c }{0.7600} &  \multicolumn{1}{c }{0.30h} & \multicolumn{1}{c }{0.7900}\\ 
 
 \hline
 \multicolumn{17}{c }{WAN}\\
 \hline
                                 &\multicolumn{8}{c }{MNIST}& \multicolumn{2}{c }{MotionSense}& \multicolumn{2}{c }{Thyroid} & \multicolumn{2}{c }{Breast cancer} &  \multicolumn{2}{c}{German credit}\\ 
\cmidrule(lr){2-9}  
\cmidrule(lr){10-11}  
\cmidrule(lr){12-13}  
\cmidrule(lr){14-15} 
\cmidrule(lr){16-17} 

                                & \multicolumn{2}{c }{2 $\times$ (128FC)}  & \multicolumn{2}{c }{3 $\times$ (128FC)}     & \multicolumn{2}{c }{2 $\times$ (512FC)}     & \multicolumn{2}{c }{3 $\times$ (512FC)}    & \multicolumn{2}{c }{2 $\times$ (512FC)}   & \multicolumn{2}{c }{2 $\times$ (100FC)} & \multicolumn{2}{c }{3 $\times$ (512FC)}&  \multicolumn{2}{c }{2 $\times$ (124FC)}    \\ 
\cmidrule(lr){2-3} 
\cmidrule(lr){4-5} 
\cmidrule(lr){6-7} 
\cmidrule(lr){8-9} 
\cmidrule(lr){10-11} 
\cmidrule(lr){12-13} 
\cmidrule(lr){14-15} 
\cmidrule(lr){16-17} 

 \multicolumn{1}{l}{Epoch}    & \multicolumn{1}{l }{Time} & \multicolumn{1}{c }{Acc} & \multicolumn{1}{c }{Time} & \multicolumn{1}{c }{Acc}  & \multicolumn{1}{c }{Time} & \multicolumn{1}{c }{Acc}& \multicolumn{1}{c }{Time} & \multicolumn{1}{c }{Acc}  & \multicolumn{1}{c }{Time} & \multicolumn{1}{c }{Acc} & \multicolumn{1}{c }{Time} & \multicolumn{1}{c }{Acc}& \multicolumn{1}{c }{Time}& \multicolumn{1}{c }{Acc}& \multicolumn{1}{c }{Time}& \multicolumn{1}{c }{Acc} \\ 

\cmidrule(lr){1-1}  
\cmidrule(lr){2-2} 
\cmidrule(lr){3-3} 
\cmidrule(lr){4-4} 
\cmidrule(lr){5-5} 
\cmidrule(lr){6-6}
\cmidrule(lr){7-7} 
\cmidrule(lr){8-8} 
\cmidrule(lr){9-9} 
\cmidrule(lr){10-10}
\cmidrule(lr){11-11} 
\cmidrule(lr){12-12} 
\cmidrule(lr){13-13} 
\cmidrule(lr){14-14} 
\cmidrule(lr){15-15} 
\cmidrule(lr){16-16} 
\cmidrule(lr){17-17}

 \multicolumn{1}{l}{1}    & \multicolumn{1}{l }{50.74h} & \multicolumn{1}{c }{0.8706} & \multicolumn{1}{c }{57.90h} & \multicolumn{1}{c }{0.9023}  & \multicolumn{1}{c }{139.71h} & \multicolumn{1}{c }{0.9341}& \multicolumn{1}{c }{190.10h} & \multicolumn{1}{c }{0.9424}  & \multicolumn{1}{c }{44.43h} & \multicolumn{1}{c }{0.8048}& \multicolumn{1}{c }{0.52h} & \multicolumn{1}{c }{0.2480} & \multicolumn{1}{c }{74.10h} & \multicolumn{1}{c }{0.4940} &  \multicolumn{1}{c }{0.15h} & \multicolumn{1}{c }{0.4100}\\ 
 
 \multicolumn{1}{l}{5}    & \multicolumn{1}{l }{253.7h} & \multicolumn{1}{c }{0.9438} & \multicolumn{1}{c }{289.5h} & \multicolumn{1}{c }{0.9536}  & \multicolumn{1}{c }{698.55h} & \multicolumn{1}{c }{0.9715}& \multicolumn{1}{c }{950.5h} & \multicolumn{1}{c }{0.9745}  & \multicolumn{1}{c }{222.15h} & \multicolumn{1}{c }{0.8847}& \multicolumn{1}{c }{2.6h} & \multicolumn{1}{c }{0.9341} & \multicolumn{1}{c }{370.5h} & \multicolumn{1}{c }{0.7360} &  \multicolumn{1}{c }{0.75h} & \multicolumn{1}{c }{0.725}\\ 
 
 \multicolumn{1}{l}{10}    & \multicolumn{1}{l }{507.4h} & \multicolumn{1}{c }{0.9504} & \multicolumn{1}{c }{579h} & \multicolumn{1}{c }{0.9604}  & \multicolumn{1}{c }{1397.1h} & \multicolumn{1}{c }{0.9772}& \multicolumn{1}{c }{1901h} & \multicolumn{1}{c }{0.9812}  & \multicolumn{1}{c }{444.3h} & \multicolumn{1}{c }{0.8855}& \multicolumn{1}{c }{5.2h} & \multicolumn{1}{c }{0.9453}& \multicolumn{1}{c }{741h} & \multicolumn{1}{c }{0.7600} & \multicolumn{1}{c }{1.5h} & \multicolumn{1}{c }{0.7900}\\ 
 \hline
\end{tabular}
\end{adjustbox}
\caption{Training time and accuracy values for various datasets and architectures after 1, 5 \& 10 training epochs using QUOTIENT over LAN and WAN.}
\label{tab:TrainingTimeLAN}
\end{table*}

\begin{table*}[t!]
\centering
\begin{adjustbox}{max width=\textwidth}
\begin{tabular}{l N N N N N N N N N N N }
\hline
                                &\multicolumn{5}{c }{MNIST}& \multicolumn{2}{c }{MotionSense} & \multicolumn{1}{c }{Thyroid} & \multicolumn{2}{c }{Breast cancer} & \multicolumn{1}{c}{German credit} \\ 
\cmidrule(lr){2-6}  
\cmidrule(lr){7-8}  
\cmidrule(lr){9-9}
\cmidrule(lr){10-11} 
\cmidrule(lr){12-12}
                                & \multicolumn{1}{c }{2 $\times$ (128FC)}  & \multicolumn{1}{c }{3 $\times$ (128FC)} & \multicolumn{1}{c }{2 $\times$ (512FC)} & \multicolumn{1}{c }{3 $\times$ (512FC)} & \multicolumn{1}{c }{Conv}     & \multicolumn{1}{c }{2 $\times$ (512FC)} & \multicolumn{1}{c }{Conv}     & \multicolumn{1}{c }{2 $\times$ (100FC)}   & \multicolumn{1}{c }{3 $\times$ (512FC)}      & \multicolumn{1}{c }{Conv}   & \multicolumn{1}{c }{2 $\times$ (124FC)}    \\ 
\cmidrule(lr){2-2} 
\cmidrule(lr){3-3} 
\cmidrule(lr){4-4} 
\cmidrule(lr){5-5} 
\cmidrule(lr){6-6}
\cmidrule(lr){7-7} 
\cmidrule(lr){8-8}
\cmidrule(lr){9-9}
\cmidrule(lr){10-10} 
\cmidrule(lr){11-11}
\cmidrule(lr){12-12}

 \multicolumn{1}{l}{Single Prediction(s)}    & \multicolumn{1}{c}{0.356} & \multicolumn{1}{c}{0.462} & \multicolumn{1}{c }{0.690} & \multicolumn{1}{c }{0.939} & \multicolumn{1}{c }{192} & \multicolumn{1}{c}{0.439}& \multicolumn{1}{c }{134} & \multicolumn{1}{c }{0.282}  & \multicolumn{1}{c }{3.58} & \multicolumn{1}{c }{62}  &  \multicolumn{1}{c }{0.272}\\ 

 \multicolumn{1}{l}{Batched Prediction (s)}    & \multicolumn{1}{c}{2.24} & \multicolumn{1}{c}{2.88} & \multicolumn{1}{c }{4.79} & \multicolumn{1}{c }{6.50} & \multicolumn{1}{c }{2226} & \multicolumn{1}{c }{2.91}& \multicolumn{1}{c }{1455} & \multicolumn{1}{c }{1.83}  & \multicolumn{1}{c }{44.02}& \multicolumn{1}{c }{447}  &  \multicolumn{1}{c }{1.77}\\ 
 
 \hline
\end{tabular}
\end{adjustbox}
\caption{Prediction time for various datasets and architectures using QUOTIENT over LAN. Here batch size = 128.}
\label{tab:PredictionTimeLAN}
\end{table*}

\begin{table*}[t]
\centering
\begin{adjustbox}{max width=\textwidth}
\begin{tabular}{l N N N N N N N N}
\hline
                                &\multicolumn{4}{c }{MNIST}& \multicolumn{1}{c }{MotionSense} & \multicolumn{1}{c }{Thyroid} & \multicolumn{1}{c }{Breast cancer} &  \multicolumn{1}{c}{German credit} \\ 
\cmidrule(lr){2-5}  
\cmidrule(lr){6-6}  
\cmidrule(lr){7-7}
\cmidrule(lr){8-8} 
\cmidrule(lr){9-9} 

                                & \multicolumn{1}{c }{2 $\times$ (128FC)}  & \multicolumn{1}{c }{3 $\times$ (128FC)} & \multicolumn{1}{c }{2 $\times$ (512FC)} & \multicolumn{1}{c }{3 $\times$ (512FC)}    & \multicolumn{1}{c }{2 $\times$ (512FC)} &      \multicolumn{1}{c }{2 $\times$ (100FC)}   & \multicolumn{1}{c }{3 $\times$ (512FC)}      &   \multicolumn{1}{c }{2 $\times$ (124FC)}    \\ 
\cmidrule(lr){2-2} 
\cmidrule(lr){3-3} 
\cmidrule(lr){4-4} 
\cmidrule(lr){5-5} 
\cmidrule(lr){6-6}
\cmidrule(lr){7-7} 
\cmidrule(lr){8-8}
\cmidrule(lr){9-9}

 \multicolumn{1}{l}{Single Prediction(s)}    & \multicolumn{1}{c}{6.8} & \multicolumn{1}{c}{8.8} & \multicolumn{1}{c }{14.4} & \multicolumn{1}{c }{19.9} &  \multicolumn{1}{c}{9.46}&  \multicolumn{1}{c }{5.99}  & \multicolumn{1}{c }{33.3} &    \multicolumn{1}{c }{5.1}\\ 

 \multicolumn{1}{l}{Batched Prediction (s)}    & \multicolumn{1}{c}{8.3} & \multicolumn{1}{c}{10.9} & \multicolumn{1}{c }{22.6} & \multicolumn{1}{c }{29.9} &  \multicolumn{1}{c }{12.08}&\multicolumn{1}{c }{6.89}  & \multicolumn{1}{c }{69.1}&    \multicolumn{1}{c }{7.3}\\ 
 
 \hline
\end{tabular}
\end{adjustbox}
\caption{Prediction time for various datasets and architectures using QUOTIENT over WAN. Here batch size = 128.}
\label{tab:PredictionTimeWAN}
\end{table*}

\subsubsection{End-to-End Running Times.} 
In the previous section, we demonstrated that large networks can match and improve upon the state-of-the-art in fixed-point deep networks. 
However, often one can use much simpler networks that are much faster and only sacrifice little accuracy. 
In order to balance accuracy and run-time, we design practical networks for each dataset and evaluate them here. 

\myparagraph{2PC-QUOTIENT Training.}
Table \ref{tab:TrainingTimeLAN} shows the running time of 2PC-QUOTIENT for practical networks across all datasets over LAN and WAN. 
We report accuracy and timings at 1,5 and 10 epochs. We note that the training time grows roughly linearly with the number of epochs. In all cases except the largest MNIST model, 10 epochs finish in under 12 days. Note that standard large deep models can easily take this long to train.

Our training protocols port nicely to the WAN settings. 
On  average, our networks are only about 5x slower over WAN than over LAN. This is is due to the low communication load and round-trip of our protocols. As a result, we present the first 2PC pragmatic neural network training solution over WAN.

\myparagraph{2PC-QUOTIENT Prediction.}
In addition to secure training, forward pass of QUOTIENT can be used for secure prediction. Table \ref{tab:PredictionTimeLAN} presents prediction timings for all datasets over LAN. In addition to fully connected networks, we also report the prediction timings over convolutional neural networks (we report residual networks in Appendix \ref{app:Residual}). The timings have been reported for a single point as well as $128$ parallel batched predictions (this corresponds to classifying many data points in one shot). Except for the multi-channel Breast cancer dataset, all single predictions take less than 1s and all batched predictions take less than 60s. 

Table \ref{tab:PredictionTimeWAN} presents an equivalent of Table \ref{tab:PredictionTimeLAN} under WAN settings. Here we limit ourselves to only fully connected architectures. While still practical, per prediction costs over WAN are, on an average, 20x slower than over LAN. This is due to the high initial setup overhead over WAN. However, batched predictions are only about 4-6x slower over WAN.

\myparagraph{Comparison with Previous Work.}
The only prior work for 2PC secure training of neural networks that we are aware of is SecureML \cite{mohassel2017secureml}. They report the results for  only fully connected neural network training on the MNIST dataset. 2PC-QUOTIENT is able to achieve its best accuracy levels in less than 16 hours over LAN and less than 90 hours over WAN. This amounts to a speedup of ~5x over LAN and a speedup of ~50x over WAN. In particular, QUOTIENT is able to make 2PC neural network training over WAN practical. Moreover, our 2PC-QUOTIENT training is able to achieve near state of the art accuracy of 99.38\% on MNIST dataset, amounting to an absolute improvement of 6\% over SecureML's error rates upon convergence. In terms of secure prediction, 2PC-QUOTIENT is 13x faster for single prediction and 7x faster for batched predictions over LAN in comparison to SecureML. Furthermore, 2PC-QUOTIENT offers a 3x speed-up for a single prediction and 50x for batched predictions versus SecureML over WAN.

\section{Conclusion}

In this paper, we introduced QUOTIENT, a new method to train deep neural network models securely that leverages oblivious transfer (OT). By simultaneously designing new machine learning techniques and new protocols tailored to machine learning training, QUOTIENT improves upon the state-of-the-art in both accuracy and speed. 
Our work is the first we are aware of to enable secure training of convolutional and residual layers, key building blocks of modern deep learning. However, training over convolutional networks is still slow, and incurs on a large communication load with our methods. Finding dedicated MPC protocols for fast evaluation of convolutional layers is an interesting venue for further work.

\begin{acks}
    Adri\`a Gasc\'on  and Matt J. Kusner were supported by The Alan Turing Institute under the \grantsponsor{epsrc}{EPSRC}{https://epsrc.ukri.org/} grant \grantnum{epsrc}{EP/N510129/1}, and funding from the UK Government's Defence \& Security Programme in support of the Alan Turing Institute.
\end{acks}


\appendix

\section{Standard AMSgrad Optimizer}

\label{ref:appAMSgradStandard}
Algorithm~\ref{alg:standardamsgrad} describes the steps for training deep neural networks using standard AMSgrad optimizer. This can be summarised as: (i) Sampling the input pair $(\ab^0, y)$ from the dataset $ \mathcal{D}$; (ii) Using the input $(\ab^0)$ to obtain the prediction $\{\ab^l\}_{l=1}^L$; (iii) Computing the loss between the prediction $\{\ab^l\}_{l=1}^L$ and the target output $(y)$, and computing the gradient $\Gb^l$ for the loss $\ell(\ab^L_i, y_i)$ with respect to each of the weights $\Wb_l$ in the network; (iv) Updating the weights using a weighted average of the past gradients, specifically, their first and second moments $\Mb^l$ and $\Vb^l$. 

\begin{algorithm2e}[t]
\DontPrintSemicolon
\SetKwComment{Comment}{{\scriptsize$\triangleright$\ }}{}
\SetKwInOut{State}{State}
\caption{Standard AMSgrad Optim.\ ($\Delta_{\textrm{ams}}$)}\label{alg:standardamsgrad}
		\KwIn{Weights $\{\Wb^l\}_{l=1}^L$, \\ Data $\mathcal{D}=\{(\ab^0_{i},y_{i}\}_{i=1}^n$, learning rate $\eta$}
        \KwOut{Updated weights $\{\Wb^l\}_{l=1}^L$}
\BlankLine
\begin{minipage}{\hsize}
  \begin{algorithmic}[1]
  	\STATE Initialize: $\{\Mb^{l}, \Vb^{l}\}_{l=1}^L =0$
  	\FOR{$t = 1,\ldots,T$}
  		\STATE $(\ab^0_{(p_a)}, y_{(p_a)}) \sim \mathcal{D}$
  		\STATE $\{\ab^l\}_{l=1}^L =$ \texttt{Forward}$(\{\Wb	^l\}_{l=1}^L, \ab^0)$ 
  		\STATE $\{\Gb^l\}_{l=1}^L =$ \texttt{Backward}$(\{\Wb^l, \ab^l\}_{l=1}^L, y)$
		\STATE $\{\Mb^l = 0.9 \Mb^l + 0.1 \Gb^l \}_{l=1}^L$ \Comment*[r]{{\scriptsize weighted mean}}
		\STATE $\{\Vb^l = 0.99 \Vb^l + 0.01 \big(\Gb^l\big)^2 \}_{l=1}^L$ \Comment*[r]{{\scriptsize weighted variance }}
		\STATE $\{ \hat{\Vb}^l = \max(\hat{\Vb}^l, \Vb^l) \}_{l=1}^L$
		\STATE $\{\Gb^l = \frac{\Mb^l}{ \sqrt{\hat{\Vb}^l} + \epsilon  } \}_{l=1}^L$ \Comment*[r]{{\scriptsize history-based scaling with square root}}
  		\STATE $\{\Wb^l = \Wb^l - \eta \Gb^l\}_{l=1}^L$ 
  	\ENDFOR
\end{algorithmic}
\end{minipage}
\end{algorithm2e}

\section{Experiments on Residual Layers}
\label{app:Residual}
In addition to fully-connected and convolutional layers, we also evaluate 2PC-QUOTIENT on residual neural networks. Table~ \ref{tab:PredictionTimeLANSkin} shows the prediction time of 2PC-QUOTIENT on Skin cancer MNIST dataset using for both fully-connected and residual neural networks, while we show its training timing using practical fully-connected neural networks in Table. \ref{tab:TrainingTimeLANSkin}.
\begin{table}[t]
\centering
\begin{tabular}{l N N N }
\hline
                                & \multicolumn{2}{c }{LAN} &  \multicolumn{1}{c }{WAN}  \\ 
\cmidrule(lr){2-3}
\cmidrule(lr){4-4} 

                                & \multicolumn{1}{c }{2 $\times$ (512FC)}& \multicolumn{1}{c }{ResNet} &  \multicolumn{1}{c }{2 $\times$ (512FC)} \\ 
\cmidrule(lr){2-2} 
\cmidrule(lr){3-3} 
\cmidrule(lr){4-4}
 \multicolumn{1}{l}{Single Prediction(s)}    &  \multicolumn{1}{c }{1.269}& \multicolumn{1}{c }{1982} & \multicolumn{1}{c }{17} \\ 

 \multicolumn{1}{l}{Batched Prediction (s)}    &  \multicolumn{1}{c }{14.31}& \multicolumn{1}{c }{18122} & \multicolumn{1}{c }{39.9} \\ 
 
 \hline
\end{tabular}
\caption{Prediction time for Residual and fully-connected neural networks on Skin cancer MNIST using QUOTIENT over LAN. Here batch size = 128.}
\label{tab:PredictionTimeLANSkin}
\end{table}

\begin{table}[t]
\centering
\begin{tabular}{l N N N N }
\hline
                             & \multicolumn{2}{c }{2 $\times$ (512FC)}  \\ 
\cmidrule(lr){1-1} 
\cmidrule(lr){2-4} 
 \multicolumn{1}{l}{Epoch}    &  \multicolumn{1}{c }{LAN} & \multicolumn{1}{l }{WAN} & \multicolumn{1}{l }{Acc}\\ 
\cmidrule(lr){1-1}  
\cmidrule(lr){2-2}  
\cmidrule(lr){3-3} 
\cmidrule(lr){4-4}
 \multicolumn{1}{l}{1}    &  \multicolumn{1}{c }{8.57h}  &  \multicolumn{1}{c }{38.44h} &  \multicolumn{1}{c }{0.2078}\\
 \multicolumn{1}{l}{5}     & \multicolumn{1}{c }{42.85h} &  \multicolumn{1}{c }{192.2h}  &  \multicolumn{1}{c }{0.7026}\\ 
 \multicolumn{1}{l}{10}    & \multicolumn{1}{c }{85.70h}  &  \multicolumn{1}{c }{384.4h}  &  \multicolumn{1}{c }{0.7157}\\ 
 \hline
\end{tabular}
\caption{Training time and accuracy values of QUOTIENT on Skin cancer MNSIT dataset after 1, 5 \& 10 training epochs using fully-connected neural network over both LAN and WAN.}
\label{tab:TrainingTimeLANSkin}
\end{table}

\section{Proof of Protocol~\ref{prot:bool-bcomb}}
\label{sec:appendix-proof}

\begin{lemma}
Let $\vec{b}$ and $\vec{a}$ be a Boolean and integer vector, respectively, shared among parties $\A, \B$.
Given a secure OT protocol, the two-party Protocol~\ref{prot:bool-bcomb} is secure against semi-honest adversaries, and computes an additive share of the inner product $\vec{b}^\top \vec{a}$ among $\A, \B$.
\end{lemma}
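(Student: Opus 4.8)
The statement packages two claims about Protocol~\ref{prot:bool-bcomb}: correctness (the two output shares reconstruct $\vec{b}^\top\vec{a}$) and simulation-based security against a semi-honest adversary corrupting a single party. Note that the Boolean vector called $\vec{b}$ in the statement is the vector $\vec{w}$ of the protocol, so I write $\vec{w}$ throughout. The plan is to treat correctness by a direct algebraic trace and security by a standard hybrid argument in the OT-hybrid model.

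For correctness, first I would identify, for each index $j$, which OT branch is selected and exploit the identity that makes the selection evaluate to $\vec{w}_j\cdot\share{\vec{a}_j}_i$. In step~4, $\B$ is the Chooser with bit $\xorshare{\vec{w}_j}_2$, so it obtains $z_{1,j}' = m_{1,\xorshare{\vec{w}_j}_2}$; checking the two cases $\xorshare{\vec{w}_j}_2\in\{0,1\}$ and using $\vec{w}_j=\xorshare{\vec{w}_j}_1\oplus\xorshare{\vec{w}_j}_2$ together with $\neg x = 1-x$ yields, in both cases, $z_{1,j}' = \vec{w}_j\cdot\share{\vec{a}_j}_1 - z_{1,j}$. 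By the same argument applied to step~5 with the roles and subscripts swapped, $z_{2,j}' = \vec{w}_j\cdot\share{\vec{a}_j}_2 - z_{2,j}$. Summing the two output shares, the total $\share{z}_1+\share{z}_2$ contains for each $j$ all four quantities $z_{1,j},z_{1,j}',z_{2,j},z_{2,j}'$; regrouping as $(z_{1,j}+z_{1,j}')+(z_{2,j}+z_{2,j}')$ the locally generated masks cancel against the $-z_{i,j}$ embedded in $z_{i,j}'$, leaving $\sum_j \vec{w}_j(\share{\vec{a}_j}_1+\share{\vec{a}_j}_2)=\sum_j \vec{w}_j\vec{a}_j=\vec{w}^\top\vec{a}$. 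This is short; I expect no difficulty beyond keeping the index bookkeeping straight.

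For security I would replace the $2m$ calls to $\ot$ by the ideal OT functionality and appeal to the sequential composition theorem for semi-honest protocols, so the concrete OT protocol can be substituted back at the end. Consider a corrupted $\A$ (the case of $\B$ is symmetric). Its view consists of its input $(\share{\vec{a}_j}_1,\xorshare{\vec{w}_j}_1)_j$, its fresh coins $(z_{1,j})_j$, its role as Sender in step~4 (from which ideal OT leaks nothing about $\B$'s choice bit), and its role as Chooser in step~5, where it learns only the outputs $(z_{2,j}')_j$. The key observation is that $z_{2,j}' = \vec{w}_j\cdot\share{\vec{a}_j}_2 - z_{2,j}$ is one-time-padded by $\B$'s fresh uniform mask $z_{2,j}$, hence is uniform over $\mathbb{Z}_q$ and independent of $\B$'s private inputs. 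The simulator therefore samples $\A$'s coins $z_{1,j}$ honestly, samples $z_{2,1}',\dots,z_{2,m-1}'$ uniformly, sets the last value so that $\sum_j(z_{1,j}+z_{2,j}')$ equals the output share $\share{z}_1$ delivered by the functionality, and feeds these through the OT simulators (sender-side in step~4, receiver-side in step~5).

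The hard part will be this final consistency step: I must argue that the simulated joint distribution of $\big((z_{1,j})_j,(z_{2,j}')_j\big)$ is identical to the real one conditioned on $\A$'s output. In the real execution the $2m$ values are i.i.d.\ uniform (the $z_{2,j}'$ being masked by independent $z_{2,j}$), so the output $\share{z}_1=\sum_j(z_{1,j}+z_{2,j}')$ is itself uniform and the tuple is uniform subject to that single linear constraint; since the functionality delivers a uniformly random $\share{z}_1$, programming one coordinate reproduces exactly this uniform-subject-to-constraint distribution and the output marginals agree as well. The remaining care is purely structural: verifying that $\A$ as Sender in step~4 and as Chooser in step~5 each admit the standard OT simulators, and invoking composition across all $2m$ instances so that the real OT may replace the ideal functionality without affecting indistinguishability.
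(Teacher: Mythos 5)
Your proof is correct and takes essentially the same route as the paper: your correctness argument is the same case analysis on the XOR-shared choice bits showing $z_{1,j}' = \vec{w}_j\cdot\share{\vec{a}_j}_1 - z_{1,j}$ (and symmetrically $z_{2,j}' = \vec{w}_j\cdot\share{\vec{a}_j}_2 - z_{2,j}$), followed by summing the four terms per coordinate so the masks cancel. For security, the paper simply asserts that simulators follow from the OT simulator because every OT message is masked by fresh randomness; your explicit OT-hybrid simulator, with the uniformity observation for the received values and the programming of one coordinate to match the delivered output share, is a correct elaboration of exactly that one-line argument.
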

\begin{proof}
For the correctness, note that, for all $j\in [m]$, $b_j a_j = b_j\share{a_j}_1 + b_j\share{a_j}_2$, and that each of the OTs in steps $4$ and $5$ of the algorithm are used to compute an additive share of each of the terms of the sum. Concretely, in the OT in line $4$ is used to compute an additive share of $b_j\share{a_j}_1$ as $z_{1,j}+z_{1,j}'$, and the value received by party $\B$ is $m_{1,\share{b_j}_2} = -z_{1,j}$ if $\share{b_j}_1 \oplus \share{b_j}_2 = 0$
and $m_{1,\share{b_j}_2} = \share{a_j}_i-z_{1,j}$ otherwise.
Security follows easily from the fact that all messages in the OTs are masked with fresh randomness $z_{i,j}$ and thus constructing simulators from a simulator for OT is straightforward.
\end{proof}

\section{Proof of Protocol \ref{prot:bool-bcomb-cot}}
\label{app:proofbool-bcomb-cot}

\begin{proof}
Privacy follows directly from the correctness of the COT subprotocol. To see that the protocol computes the right value, we argue 
for a $j\in [m]$, and distinguish cases according to all possible values of the shares of $w_j$.

\noindent
\textbf{Case I} ($w_j=0$; $\share{w_j}_1=1$, $\share{w_j}_2=1$): In this case $f^1(x)=x - \share{a_j}_1$ and $f^2(x') = x' - \share{a_j}_2$.
Upon execution of step 2,  $P_1$ obtains $x$ and $P_2$ obtains $y = x - \share{a_j}_1$.
Upon execution of step 3, $P_2$ obtains $x'$ and $P_1$ obtains $y' = x' - \share{a_j}_2$.
$P_1$ accumulates  $\share{z_j}_1 = \share{a_j}_1 - x + x' - \share{a_j}_2$ into $\share{z}_1$ and $P_2$ accumulates $\share{z_j}_2 = \share{a_j}_1 - x' + x - \share{a_j}_1$ into $\share{z}_2$.
Then $\share{z_j}_1 + \share{z_j}_2 = 0$, which is $w_ja_j$ as $w_j=0$.

\noindent
\textbf{Case II} ($w_j=0$; $\share{w_j}_1=0$, $\share{w_j}_2=0$):
In this case $f^1(x)=x + \share{a_j}_1$ and $f^2(x') = x' + \share{a_j}_2$.
Upon execution of step 2, $P_1$ obtains $x$ and $P_2$ obtains $y = x$.
Upon execution of step 3, $P_2$ obtains $x'$ and $P_1$ obtains $y' = x'$
$P_1$ accumulates  $\share{z_j}_1 = - x + x'$ into $\share{z}_1$ and $P_2$ accumulates $\share{z_j}_2 = - x' + x$ into $\share{z}_2$.
Then $\share{z_j}_1 + \share{z_j}_2 = 0$, which is  $w_ja_j$ as $w_j=0$.

\noindent
\textbf{Case III} ($w_j=1$; $\share{w_j}_1=0$, $\share{w_j}_2=1$): 
In this case $f^1(x) = x + \share{a_j}_1$ and $f^2(x') = x' - \share{a_j}_2$.
Upon execution of step 2, $P_1$ obtains $x$ and $P_2$ obtains $y = x + \share{a_j}_1$.
Upon execution of step 3, $P_2$ obtains $x'$, $P_1$ obtains $y' = x'$.
$P_1$ accumulates  $\share{z_j}_1 = - x + x'$ into $\share{z}_1$ and $P_2$ accumulates $\share{z_j}_2 = \share{a_j}_2 - x' + x + \share{a_j}_1$ into $\share{z}_2$.
Then $\share{z_j}_1 + \share{z_j}_2 = a_j$, which is  $w_ja_j$ as $w_j=1$.

\noindent
\textbf{Case IV}  ($w_j=1$; $\share{w_j}_1=1$, $\share{w_j}_2=0$):
In this case $f^1(x)=x - \share{a_j}_1$ and $f^2(x') = x' + \share{a_j}_2$.
Upon execution of step 2, $P_1$ obtains $x$ and  $P_2$ obtains $y = x$.
Upon execution of step 3, $P_2$ obtains $x'$ and $P_1$ obtains $y' = x' + \share{a_j}_2$.
$P_1$ accumulates  $\share{z_j}_1 = \share{a_j}_1 - x + x' + \share{a_j}_2$ into $\share{z}_1$, $P_2$ accumulates $\share{z_j}_2 = - x' + x $ into $\share{z}_2$.
Then, $\share{z_j}_1 + \share{z_j}_2 = a_j$, which is $w_ja_j$ as $w_j=1$.
\end{proof}

\end{document}